\newif\ifsiam
\newcommand{\qed}{\vbox{\hrule height0.6pt\hbox{%
   \vrule height1.3ex width0.6pt\hskip0.8ex
   \vrule width0.6pt}\hrule height0.6pt}}
\newtheorem{theorem}{Theorem}[section]
\newtheorem{lemma}[theorem]{Lemma}
\newtheorem{proposition}[theorem]{Proposition}
	\renewcommand{\and}{ ~and~ }
\numberwithin{equation}{section}
\definecolor{alertColor}{rgb}{0.7, 0, 0}
\newcommand{\br}{{r}}
\newcommand{\balpha}{{\alpha}}
\newcommand{\bbeta}{{\beta}}
\newcommand{\calA}{{\mathcal A}}
\newcommand{\calB}{{\mathcal B}}
\newcommand{\calE}{{\mathcal E}}
\newcommand{\calI}{{\mathcal I}}
\newcommand{\calM}{{\mathcal M}}
\newcommand{\calN}{{\mathcal N}}
\newcommand{\calS}{{\mathcal S}}
\newcommand{\calU}{{\mathcal U}}
\newcommand{\calV}{{\mathcal V}}
\newcommand{\bbC}{{\mathbb C}}
\newcommand{\bbN}{{\mathbb N}}
\newcommand{\bbP}{{\mathbb P}}
\newcommand{\bbQ}{{\mathbb Q}}
\newcommand{\bbR}{{\mathbb R}}
\renewcommand{\Re}{{\rm Re}}
\renewcommand{\Im}{{\rm Im}}
\renewcommand{\i}{{\rm i}}
\newcommand{\perm}{{\rm perm}}
\newcommand{\rot}{{\rm rot}}
\newcommand{\aprod}[1]{\sideset{^#1}{}\prod}
\newcommand{\tot}{{\rm tot}}
\newcommand{\q}{{\rm q}}
\newcommand{\tb}{{\rm q}}
\newcommand{\cut}{{\rm cut}}
\newcommand{\pair}{{\rm pair}}
\newcommand{\eam}{{\rm eam}}
\newcommand{\sym}{{\rm sym}}
\newcommand{\tildeM}{M}
\newcommand{\Ncap}{N_{\rm cap}}
\begin{document}

\title{Moment Tensor Potentials: a class of systematically improvable interatomic potentials}
\author{Alexander V. Shapeev%
\thanks{Skolkovo Institute of Science and Technology, Skolkovo Innovation Center, Building 3, Moscow 143026 Russia. Email: {\tt a.shapeev@skoltech.ru}\,.}
}
\maketitle

\begin{abstract}
Density functional theory offers a very accurate way of computing materials properties from first principles.
However, it is too expensive for modelling large-scale molecular systems whose properties are, in contrast, computed using interatomic potentials.

The present paper considers, from a mathematical point of view, the problem of constructing interatomic potentials that approximate a given quantum-mechanical interaction model. In particular, a new class of systematically improvable potentials is proposed, analyzed, and tested on an existing quantum-mechanical database.
\end{abstract}


\section{Introduction}

Molecular modelling is an increasingly popular tool in biology, chemistry, physics, and materials science \cite{Griebel-book}.
The success of molecular modelling largely depends on the accuracy and efficiency of calculating the interatomic forces.
The two major approaches to computing interatomic forces are (1) quantum mechanics (QM) calculations \cite{Cances2003book,Finnis2003book}, and (2) using empirical interatomic potentials.
In the first approach, first the electronic structure is computed and then the forces on the atoms (more precisely, on their nuclei) are deduced.
QM calculations are therefore rather computationally demanding, but can yield a high quantitative accuracy.
On the other hand, the accuracy and transferability of empirical interatomic potentials is limited, but they are orders of magnitude less computationally demanding.
With interatomic potentials, typically, the forces on atoms derive from the energy of interaction of each atom with its atomic neighborhood (typically consisting of tens to hundreds of atoms).

This hence makes it very attractive to design a combined approach whose efficiency would be comparable to interatomic potentials, yet the accuracy is similar to the one obtained with ab initio simulations.
Moreover, the scaling of computational complexity of using the interatomic potentials is $O(N)$, where $N$ is the number of atoms, whereas, for instance, standard Kohn-Sham density functional theory calculations scale like $O(N^3)$---which implies that the need in such combined approaches is even higher for large numbers of atoms.

One way of achieving this is designing accurate interatomic potentials.
With this goal in mind, we categorize all potentials into two groups, following the regression analysis terminology of \cite{BartokKondorCsanyi2013descriptors}.
The first group is the parametric potentials with a fixed number of numerical or functional parameters.
All empirical potentials known to the author are parametric, which is their disadvantage---their accuracy cannot be systematically improved.
As a result, if the class of problems is sufficiently large and the required accuracy is sufficiently high (say, close to that of ab initio calculations), then the standard parametric potentials are not sufficient and one needs to employ nonparametric potentials.
In theory, the accuracy of these potentials can be systematically improved.
In practice, however, increasing the accuracy is done at a cost of higher computational complexity (which is nevertheless orders of magnitude less than that of the QM calculations), and the accuracy is still limited by that of the available QM model the potential is fitted to.
Also, it should be noted that, typically, such fitted potentials do not perform well for the situations that they were not fitted to (i.e., they exhibit little or no transferability).

Each nonparametric potential has two major components: (1) a representation (also referred to as ``descriptors'') of atomic environments, and (2) a regression model which is a function of the representation.
It should be emphasized that finding a good representation of atomic environments is not a trivial task, as the representation is required to satisfy certain restrictions, namely, invariance with respect to Euclidean transformations (translations, rotations, and reflections) and permutation of chemically equivalent atoms, smoothness with respect to distant atoms coming to and leaving the boundary of the atomic environment, as well as completeness (for more details see the discussion in \cite{BartokKondorCsanyi2013descriptors,Behler2011symmetry_functions,Ferre2015permutation}).
The latter restriction means that the representation should contain all the information to unambiguously reconstruct the atomic environment.

In this context, one common approach to constructing nonparametric potentials, called the neural networks potentials (NNP) \cite{Behler2014review,BehlerParrinello2007NN}, is using artificial neural networks as the regression model and a family of descriptors first proposed by Behler and Parrinello \cite{BehlerParrinello2007NN}.
These descriptors have a form of radial functions applied to and summed over atomic distances to the central atom of the environment, augmented with similar functions with angular dependence summed over all pairs of atoms.
Another approach called Gaussian Approximation Potentials (GAP) \cite{2010GAP,2014GAP_tungsten} uses Gaussian process regression and a cleverly designed set of descriptors that are evaluated by expanding smoothened density field of atoms into a spherical harmonics basis \cite{BartokKondorCsanyi2013descriptors}.

Other recent approaches include \cite{ThompsonEtAl2014snap} which employs the bispectrum components of the atomic density, as proposed in an earlier version of GAP \cite{2010GAP}, and uses a linear regression model to fit the QM data.
Also, \cite{LiKermodeDevita2015learning_forces} uses Gaussian process regression of a force-based model rather than the potential energy-based model.
Additionally, in a recent work \cite{Mallat2015} the authors put forward an approach of regressing the interatomic interaction energy together with the electron density.

It is worthwhile to mention a related field of research, namely the development of non-reactive interatomic potentials.
Such potentials assume a fixed underlying atomistic structure (fixed interatomic bonds or fixed lattice sites that atoms are bound to).
The works developing nonparametric versions of such potentials include \cite{Ai2014slave,Hellman2013TDEP,Nelson2013cluster}.

In the present paper we propose a new approach to constructing nonparametric potentials based on linear regression and invariant polynomials.
The main feature of this approach is that the proposed form of the potential can provably approximate any regular function satisfying all the needed symmetries (see Theorems \ref{th:main} and \ref{th:conv}), while being computationally efficient.
The building block of the proposed potentials is what we call the moment tensors---similar to inertia tensors of atomic environments.
Hence we call this approach the Moment Tensor Potentials (MTP).

The manuscript is structured as follows.
Section \ref{sec:IP} gives an overview of interatomic potentials.
Section \ref{sec:MTP} introduces MTP and then formulates and proves the two main theorems.
The practical implementation of MTP is discussed in Section \ref{sec:implementation}.
Section \ref{sec:performance-tests} reports the accuracy and computational efficiency (i.e., the CPU time) of MTP and compares MTP with GAP.

\section{Interatomic Potentials}\label{sec:IP}

Consider a system consisting of $N_\tot$ atoms with positions $x\in(\bbR^d)^{N_\tot}$ (more precisely, with positions of their nuclei to be $x$), where $d$ is the number of physical dimensions, taken as $d=3$ in most of applications.
These atoms have a certain energy of interaction, $E^\q(x)$, typically given by some QM model that we aim to approximate.
For simplicity, we assume that all atoms are chemically equivalent.

We make the assumption that $E^\q(x)$ can be well approximated by a sum of energies of atomic environments of the individual atoms.
We denote the atomic environment of atom $k$ by $Dx_k$ and let it equal to a tuple
\begin{equation}\label{eq:D_def}
Dx_k := (x_i-x_k)_{1\leq i\leq N_\tot ,\, 0<|x_i-x_k|\leq R_\cut},
\end{equation}
where $R_\cut$ is the cut-off radius, in practical calculations typically taken to be between $5$ and $10$\AA.
In other words, $Dx_k$ is the collection of vectors joining atom $k$ with all other atoms located at distance $R_\cut$ or closer.
Thus, we are looking for an approximant of the form
\begin{equation}\label{eq:E_def}
E(x) := \sum_{k=1}^{N_\tot} V(Dx_k),
\end{equation}
where $V$ is called the interatomic potential.
This assumption is true in most systems with short-range interactions (as opposed to, e.g., Coulomb interaction in charged or polarized systems), refer to recent works \cite{ChenOrtner1,ChenOrtner2,NazarOrtner} for rigorous proofs of this statement for simple QM models.

Mathematically, since $Dx_k$ can be a tuple of any size (in practice limited by the maximal density of atoms), $V$ can be understood as a family of functions each having a different number of arguments.
For convenience, however, we will still refer to $V$ as a ``function''.

The function $V=V(u_1,\ldots,u_n)$ is required to satisfy the following restrictions:
\begin{description}
\item[(R1)] Permutation invariance:
\[
V(u_1,\ldots,u_n) = V(u_{\sigma_1},\ldots,u_{\sigma_n})
\qquad\text{for any $\sigma\in \calS_n$},
\]
where $\calS_n$ denotes the set of permutations of $(1,\ldots,n)$.

\item[(R2)] Rotation and reflection invariance:
\[
V(u_1,\ldots,u_n) = V(Q u_1,\ldots,Q u_n)
\qquad\text{for any $Q\in O(d)$},
\]
where $O(d)$ is the orthogonal group in $\bbR^d$.
For simplicity in what follows we will denote $Qu := (Q u_1,\ldots,Q u_n)$.

\item[(R3)] Smoothness with respect to the number of atoms (more precisely, with respect to atoms leaving and entering the interaction neighborhood):
\begin{align*}
V & : \bbR^{d\times n}\to \bbR\text{ is a smooth function}
\\
V(u_1,\ldots,u_n)
&=
V(u_1,\ldots,u_n,u_{n+1})
\qquad\text{whenever $|u_{n+1}|\geq R_\cut$}.
\end{align*}
In most of the works performing practical calculations, including \cite{Behler2011symmetry_functions,Behler2014review,BehlerParrinello2007NN,Szlachta2014thesis,2014GAP_tungsten,ThompsonEtAl2014snap}, the interatomic potentials are chosen such that the energy and forces are continuous. There are, however, exceptions including \cite{2010GAP} that require the second derivatives of energy to be continuous, and \cite{jalkanen2015systematic} proposing exponential decay of the potential with no strict cut-off radius.
\end{description}
Note that $E$ is translation symmetric by definition.

\subsection{Empirical Interatomic Potentials}

For the purpose of illustration, we will briefly present two popular classes of empirical interatomic potentials. The first class is pair potentials (also known as two-body potentials),
\[
V^\pair(u) = \sum_{i} \varphi(|u_i|),
\]
where by $u$ we denote the collection of the relative coordinates, $u=(u_i)_{i=1}^{n}$.
It is clear that this potential satisfies {\bf (R1)} and {\bf (R2)}, while if we take $\varphi(r)=0$ for $r\geq R_\cut$ (as is done in most of practical calculations) then it also satisfies {\bf (R3)}.
The second potential is the Embedded Atom Model (EAM),
\[
V^\eam(u) = \sum_{i} \varphi(|u_i|)
+
F\Big({\textstyle
	\sum_{i} \rho(|u_i|)
}\Big),
\]
where if $\varphi(r)$ and $\rho(r)$ are chosen to vanish for $r \geq R_\cut$ then it also satisfies {\bf (R1)}--{\bf (R3)}.

Both potentials can be viewed to have descriptors of the form
\begin{equation}\label{eq:desc_radial}
R_\nu(u) = \sum_{i} f_\nu(|u_i|),
\end{equation}
where $f_\nu$ are chosen such that their linear combination can approximate any smooth function (e.g., $f_\nu(r) = r^\nu (R_\cut-r)^{2}$
for $r\leq R_\cut$ and $f_\nu(r)=0$ for $r>R_\cut$, where $\nu=0,1,\ldots$; the term $(R_\cut-r)^{2}$ ensures continuous energy and forces).
Then, we can approximate $\varphi(r) \approx \sum_{\nu} c_\nu f_\nu(r)$ (the sum is taken over some finite set of $\nu$), and hence
\[
V^\pair(u) \approx \sum_{\nu} c_\nu R_\nu(u).
\]
Likewise one can approximate $V^\eam$ with the exception that it will not be, in general, a linear function of $R_\nu(u)$.

This set of descriptors is not complete, because it is based only on the distances to the central atom, but is insensitive to bond angles.

\subsection{Non-empirical Interatomic Potentials}

Next, we review a number of non-empirical interatomic potentials proposed recently as a more accurate alternative to the empirical ones.

We start with the NNPs \cite{Behler2011symmetry_functions,Behler2014review,BehlerParrinello2007NN}.
In addition to the descriptors given by \eqref{eq:desc_radial} they also employ the descriptors of the form
\begin{equation}\label{eq:desc_three_body}
\sum_{i=1}^n \sum_{\substack{j=1 \\ j\ne i}}^n f(|u_i|,|u_j|, u_i \cdot u_j).
\end{equation}
(Note that NN can be, in principle, used with any set of descriptors. If these descriptors are complete then such NNP is systematically improvable.)
Typically, 50 to 100 descriptors of the form \eqref{eq:desc_radial} and \eqref{eq:desc_three_body} are chosen as an input to a NN, while its output yields the function $V$ \cite{Behler2014review}.
In practice, this approach gives convincing results, however, it is an open question whether or not these descriptors are complete.

The GAP \cite{2010GAP,2014GAP_tungsten} uses a different idea, consisting of: (1) forming a smoothened atomic density function
\[
\sum_{i=1}^n \exp\big({\textstyle -\frac{|u_i-u|^2}{2\sigma^2}}\big),
\]
(2) approximating it through spherical harmonics, and (3) constructing functionals applied to the spherical harmonics coefficients that satisfy all needed symmetries.
One can prove mathematically that this approach can approximate any regular symmetric function (i.e., a function that satisfies {\bf (R1)}--{\bf (R3)}).
However, expanding functions in a spherical harmonics basis can be computationally expensive.

Several other non-empirical potentials have been proposed recently.
A method using a representation based on expanding the atomic density function in spherical harmonics together with linear regression was used in \cite{ThompsonEtAl2014snap}.
In \cite{LiKermodeDevita2015learning_forces} the authors use Gaussian process regression to train a model that predicts forces on atoms directly (as opposed to a model that fits the energy).

\section{Moment Tensor Potentials}\label{sec:MTP}

\subsection{Representation with Invariant Polynomials}\label{sec:representation}

In the present paper we propose an alternative approach based on invariant polynomials.
The idea is that any given potential $V^*(u)$, if it is smooth enough, can be approximated by a polynomial $p(u)\approx V^*(u)$---we will analyze the error of such an approximation in Section \ref{sec:error}.
The approximant can always be chosen symmetric.
Indeed, given $p(u)$, one can consider the symmetrized polynomial
\begin{equation}\label{eq:p_symm}
p^\sym(u_1,\ldots,u_n) = \frac{1}{n!} \sum_{\sigma\in\calS_n} p(u_{\sigma_1},\ldots,u_{\sigma_n}),
\end{equation}
and then $V^*(u)\approx p^\sym(u)$.
Similarly, one can symmetrize $p^\sym$ with respect to rotations and reflections.
Hence, theoretically, one can construct a basis of such polynomials $b_\nu(u)$ and choose
\[
V^*(u) \approx V(u) := \sum_\nu c_\nu b_\nu(u).
\]
This approach is implemented for small systems of up to ten atoms \cite{BraamsBowman2009inv_poly}, however, generalizations of this approach require a more efficient way of generating the invariant polynomials.
The main difficulty is related to the fact that the number of permutations in a system of $n$ atoms is $n!$ which grows too fast in order to, for instance, calculate the right hand side of \eqref{eq:p_symm}.

In the present paper we propose a basis for the set of all polynomials invariant with respect to permutation, rotation, and reflection.
The main feature of the proposed basis is that the computational complexity of computing these polynomials scales like $O(n)$.
Moreover, one can easily construct such bases to also satisfy the {\bf (R3)} property (refer to Section \ref{sec:implementation}), making it a promising candidate for efficient nonparametric interatomic potentials.

\subsubsection*{The $M$ polynomials}

The building blocks of the basis functions for the approximation (representation) of $V=V(u)$ are the ``moment'' polynomials $M = M_{\bullet,\bullet}(u)$ defined as follows.

For integer $\mu,\nu\geq 0$ we let
\[
M_{\mu,\nu}(u) := \sum_{i=1}^{n} |u_i|^{2\mu} u_i^{\otimes \nu},
\]
where $w^{\otimes \nu} := w\otimes\ldots\otimes w$ is the Kronecker product of $\nu$ copies of the vector $w\in\bbR^d$.
Thus, $M_{\mu,\nu}(u) \in (\bbR^d)^\nu$ (i.e., an $\nu$-dimensional tensor) for each $u$.
Computing $M_{\mu,\nu}(u)$ requires linear time in $n$, but exponential in $\nu$.
This means that if the maximal value of $\nu$ is bounded
then computing $M_{\mu,\nu}$ can be done efficiently.

There is a mechanical interpretation of $M_{\mu,\nu}(u)$.
Consider $\mu=0$, then $M_{0,0}$ simply gives the number of atoms at the distance of $R_\cut$ or less (this can also be understood as the ``mass'' of these atoms), $M_{0,1}$ is the center of mass of such atoms (scaled by the mass), $M_{0,2}$ is the tensor of second moments of inertia, etc.
For $\mu>0$, $M_{\mu,\nu}$ can be interpreted as weighted moments of inertia, with $i$-th atom's weight being $|u_i|^{2\mu}$.

\subsubsection*{The basis polynomials $B_\balpha$}

The basis polynomials are indexed by $k\in\bbN$, $k\leq n$, where our definition for $\bbN$ is
\[
\bbN=\{0,1,\ldots\},
\]
and a $k\times k$ symmetric matrix $\alpha$ of integers $\alpha_{i,j}\geq 0$ ($i,j\in\{1,\ldots,k\}$).
For such matrices, by $\alpha_i'$ we define the sum of the off-diagonal elements of $i$-th row,
\begin{equation}
\label{eq:alpha-prime}
\alpha_i' = \sum_{\substack{j=1 \\ j\ne i}}^{k} \alpha_{i,j}.
\end{equation}
Next, we define a contraction operator (product) of tensors $T^{(i)}\in(\bbR^d)^{\alpha_{i}'}$ by
\begin{align*}
\aprod\balpha_{i=1}^k T^{(i)}
&:=
\sum_{\beta}
\prod_{i=1}^k
T^{(i)}_{\beta^{(1,i)}\ldots\beta^{(i-1,i)}\beta^{(i,i+1)}\ldots\beta^{(i,k)}},
\end{align*}
where each $\beta$ is a collection of multiindices
$\beta=\big(\beta^{(i,j)}\big)_{1\leq i < j \leq k}$,
and each multiindex $\beta^{(i,j)}$ has the following form,
\[
\beta^{(i,j)} = \big(\beta^{(i,j)}_1,\ldots,\beta^{(i,j)}_{\alpha_{i,j}}\big)\in\{1,\ldots,d\}^{\alpha_{i,j}} \quad 1\leq i < j \leq k.
\]
To define this contraction rigorously, we let
\begin{equation}\label{eq:calMcalB}
\calM := \{(i,j)\in \{1,\ldots,m\}^2 : i<j\},
\qquad\text{and}\qquad
\calB := (\{1,\ldots,d\}^{\alpha_{ij}})_{(i,j)\in\calM}
\end{equation}
and, 
expanding the multiindex notation, we write
\begin{align*}
&
\aprod\balpha_{i=1}^k T^{(i)}
=
\sum_{\beta\in\calB}
\prod_{i=1}^k
T^{(i)}_{
	\beta^{(1,i)}_1\ldots\beta^{(1,i)}_{\alpha_{1i}}
	~~\ldots~~
	\beta^{(i-1,i)}_1\ldots\beta^{(i-1,i)}_{\alpha_{i-1,i}}
	~~
	\beta^{(i,i+1)}_1\ldots\beta^{(i,i+1)}_{\alpha_{i,i+1}}
	~~\ldots~~
	\beta^{(i,k)}_1\ldots\beta^{(i,k)}_{\alpha_{i,k}}}
.
\end{align*}
Hence $\alpha_{ij}$ can be interpreted as how many dimensions are contracted between $T^{(i)}$ and $T^{(j)}$.

Finally, we let
\begin{equation}\label{eq:Balpha}
B_{\balpha}(u)
:=
\aprod\balpha_{i=1}^k M_{\alpha_{ii},\alpha_i'}(u)
,
\end{equation}
and call it a basis function (for a given $\balpha$).
Here $B_{\balpha}(u)$ is, essentially, a $k$-body function.
We note that $B_{\balpha}(u) \equiv B_{\bbeta}(u)$ if there exists a permutation $\sigma\in\calS_k$ such that $\alpha_{ij} = \beta_{\sigma_i\sigma_j}$ for all $i$ and $j$.

For illustrative purposes
we work out three examples of different $B_\alpha$.
First, let us take
$\alpha=\begin{pmatrix}
\mu_1 & 1 \\ 1 & \mu_2
\end{pmatrix}.$
This implies $k=2$ (since this is a $2\times 2$ matrix) and $\alpha_1' = \alpha_2' = 1$. Hence $B_\alpha$ is the scalar product of $M_{\mu_1,1}$ and $M_{\mu_2,1}$, indeed:
\[
B_\alpha =
\sum_{\beta^{(1,2)}_1 = 1}^{d}
	\big(M_{\mu_1,1}\big)_{\beta^{(1,2)}_1}
	\big(M_{\mu_2,1}\big)_{\beta^{(1,2)}_1}
= M_{\mu_1,1} \cdot M_{\mu_2,1},
\]
where $\big(M_{\mu_1,1}\big)_{\beta^{(1,2)}_1}$ denotes the $\beta^{(1,2)}_1$-th component of the vector $M_{\mu_1,1}$.

As the next example, we take
$\alpha=\begin{pmatrix}
\mu_1 & 2 \\ 2 & \mu_2
\end{pmatrix}$.
Then $\alpha_1'=\alpha_2'=2$, and hence $B_\alpha$ is the Frobenius product of the two matrices, $M_{\mu_1,2}$ and $M_{\mu_2,2}$:
\[
B_\alpha =
\sum_{\beta^{(1,2)}_1 = 1}^{d}
\sum_{\beta^{(1,2)}_2 = 1}^{d}
	\big(M_{\mu_1,2}\big)_{\beta^{(1,2)}_1,\beta^{(1,2)}_2}
	\big(M_{\mu_2,2}\big)_{\beta^{(1,2)}_1,\beta^{(1,2)}_2}
= M_{\mu_1,1} \!:\! M_{\mu_2,1}.
\]
In the last example, we take
$\alpha=\begin{pmatrix}
\mu_1 & 1 & 1 \\ 1 & \mu_2 & 0 \\ 1 & 0 & \mu_3
\end{pmatrix}$.
Then $B_\alpha$ is the following contraction of the matrix $M_{\mu_1,2}$, and two vectors, $M_{\mu_2,1}$ and $M_{\mu_2,1}$:
\[
B_\alpha =
\sum_{\beta^{(1,2)}_1 = 1}^{d}
\sum_{\beta^{(1,3)}_1 = 1}^{d}
	\big(M_{\mu_1,2}\big)_{\beta^{(1,2)}_1,\beta^{(1,3)}_1}
	\big(M_{\mu_2,1}\big)_{\beta^{(1,2)}_1}
	\big(M_{\mu_3,1}\big)_{\beta^{(1,3)}_1}
= \big(M_{\mu_1,2} M_{\mu_2,1}\big)\cdot M_{\mu_3,1}.
\]

\subsubsection*{Representability}

We are now ready to formulate the main result, namely that the linear combinations of $B_\alpha$ span all permutation and rotation invariant polynomials.
Let us denote the set of all polynomials of $n$ vector-valued variables by $\bbP$, the set of permutation invariant polynomials by $\bbP_{\perm}\subset\bbP$ and the set of rotation invariant polynomials by $\bbP_{\rot}\subset\bbP$.

\begin{theorem}\label{th:main}
The polynomials $B_\balpha$ form a spanning set of the linear space $\bbP_\rot\cap\bbP_\perm \subset \bbP$, in the sense that any $p\in\bbP_\rot\cap\bbP_\perm$ can be represented by a (finite) linear combination of $B_\balpha$ (but this combination is, in general, not unique).
\end{theorem}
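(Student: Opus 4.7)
The plan is to combine the first fundamental theorem of invariant theory for $O(d)$ with M\"obius inversion on the lattice of set partitions. The easy direction---that every $B_\balpha$ already lies in $\bbP_\rot\cap\bbP_\perm$---is immediate from the definition: each $M_{\mu,\nu}(u)=\sum_i|u_i|^{2\mu}u_i^{\otimes\nu}$ is manifestly permutation-symmetric and transforms as a rank-$\nu$ tensor under $O(d)$, while in \eqref{eq:Balpha} every tensor index is contracted with another, so orthogonal transformations cancel via $Q^\top Q=I$. The nontrivial content of the theorem is the converse, which I would tackle in three steps.

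First, I would apply the first fundamental theorem for $O(d)$ acting on $(\bbR^d)^n$ to write any $p\in\bbP_\rot\cap\bbP_\perm$ as a polynomial in the scalar products $u_i\cdot u_j$. Grouping the resulting monomials by the underlying (loop-allowed) multigraph on the vertex set $\{1,\ldots,n\}$ and using the permutation invariance of $p$, I would rewrite
\[
p(u)=\sum_{H}c_H\,P_H(u),\qquad P_H(u):=\sum_{\phi:V(H)\hookrightarrow\{1,\ldots,n\}}\prod_{(i,j)\in E(H)}u_{\phi(i)}\cdot u_{\phi(j)},
\]
where $H$ runs over a finite set of isomorphism classes of multigraphs with loops and $\phi$ ranges over injections of $V(H)$ into $\{1,\ldots,n\}$.

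Second, I would connect these symmetrized sums to the $B_\balpha$. Associate to a looped multigraph $H$ on $k$ vertices the symmetric integer matrix $\balpha(H)$ whose off-diagonal entry $\alpha_{ij}$ equals the number of $H$-edges between $i\neq j$ and whose diagonal entry $\alpha_{ii}$ equals the number of self-loops at $i$. A direct expansion of \eqref{eq:Balpha} after substituting $M_{\mu,\nu}(u)=\sum_j|u_j|^{2\mu}u_j^{\otimes\nu}$ gives
\[
B_{\balpha(H)}(u)=\sum_{\phi:V(H)\to\{1,\ldots,n\}}\prod_{(i,j)\in E(H)}u_{\phi(i)}\cdot u_{\phi(j)},
\]
which is the same sum as $P_H$ but with the injectivity requirement dropped. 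M\"obius inversion on the partition lattice of $V(H)$ then yields
\[
P_H(u)=\sum_{\pi}\mu(\hat 0,\pi)\,B_{\balpha(H/\pi)}(u),
\]
where $H/\pi$ is the multigraph obtained by identifying vertices within each block of $\pi$, edges between identified vertices becoming new self-loops. Combining with the first step exhibits $p$ as a finite linear combination of $B_\balpha$'s, as required.

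The main obstacle I foresee is the combinatorial bookkeeping in the identity $B_{\balpha(H)}=\sum_\phi(\cdots)$: I must verify that the contraction pattern encoded by $\balpha$ in \eqref{eq:Balpha} really reproduces the edge multiplicities of $H$, and that the quotient operation $H\mapsto H/\pi$ is consistent with promoting absorbed edges into diagonal entries (which in turn control the $|u|^{2\alpha_{ii}}$ weights inside the moment tensors). Once this correspondence is set up cleanly, the inclusion--exclusion identity relating injective and unrestricted sums is the standard one on the partition lattice, and everything assembles routinely.
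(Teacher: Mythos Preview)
Your proposal is correct and matches the paper's approach: reduce via the first fundamental theorem for $O(d)$ to polynomials in the inner products $u_i\cdot u_j$, identify each $B_\balpha$ with the sum of the corresponding monomial over \emph{all} (not necessarily injective) maps $\gamma\in\{1,\dots,n\}^m$---this is precisely the paper's key lemma---and then show that these span the same space as the sums over injections. The only cosmetic difference is in this last step: the paper argues by induction on $m$ that the discrepancy between the unrestricted and injective sums lies in the span generated at level $m-1$, whereas you invoke M\"obius inversion on the partition lattice to write $P_H$ directly as $\sum_\pi\mu(\hat 0,\pi)\,B_{\balpha(H/\pi)}$; the two arguments are equivalent, your formula being the closed-form solution of the paper's recursion.
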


We postpone the proof of this result to Section \ref{sec:proof}, after we illustrate in Section \ref{sec:error} that a QM model can be efficiently approximated with polynomials.


\subsection{Approximation Error Estimate}\label{sec:error}

In this section we present an error analysis of fitting the prototypical tight-binding QM model as proposed in \cite{ChenOrtner1,ChenOrtner2} with the polynomials $B_\alpha(u)$.
We note that we directly take the site energy $V^\tb(u)$ constructed in \cite{ChenOrtner1,ChenOrtner2} (rather than starting from the total energy $E^\tb(x)$) and fix $n$ to be constant throughout this analysis.
The latter assumption needs some comment.
Letting $n$ to be constant is essentially equivalent to assuming finite $R_\cut$, provided that the atoms cannot come too close to each other.
We note that making such assumption is not an insignificant limitation of the present analysis, however, we find lifting this limitation not at all trivial.

The QM model is defined as follows. We let, for convenience, $u_0 := 0$ and define the Hamiltonian matrix
\[
H_{ij}(u) := \begin{cases}
\varphi(u_j-u_i) & \quad i,j \in \{0,\ldots,n\} ~~\text{and}~~ i\ne j, \\
0 & \quad i=j\in\{0,\ldots,n\},
\end{cases}
\]
where $\varphi:\bbR^d\to\bbR$ is an empirically chosen function referred to as the hopping integral \cite{Finnis2003book}.

Then we apply the function $f^\tb$ to the matrix $H$ (refer to \cite{Higham2008functions} for details on matrix functions) and define $V^\tb$ as the $(0,0)$-th element of this matrix:
\[
V^\tb(u) := (f^\tb(H))_{0,0},
\]
where
\[
f^\tb(\epsilon) := \epsilon\,\Big(1+e^{\epsilon/(k_{\rm B} T)}\Big)^{-1},
\]
where $k_{\rm B}>0$ is the Boltzmann constant, and $T>0$ is the electronic temperature, and we take the chemical potential to be zero.

Next, we let $R>0$ and $\delta_0>0$, allow $u_i$ to vary in $\calV_{\delta_0}$, where \[
\calV_{\delta} := \{\zeta\in\bbC : \Re(\zeta)\in[-R-\delta,R+\delta]^d ,\, \Im(\zeta)\in[-\delta,\delta]^d \},
\]
and assume that $\varphi(v)$ is analytically extended to $\calV_{\delta_0}$.
We note that in many models $\varphi(\zeta)$ has a singularity at $\zeta=0$ (for example, $\varphi(\zeta) = \beta_0 \exp(-q |\zeta|)$ for some $\beta_0$, $q>0$ \cite[Equation (7.24)]{Finnis2003book}), therefore by assuming analytical extensibility of $\varphi$ onto $\calV_{\delta_0}$ we implicitly assume some approximation of such irregular $\varphi$ with a function that is regular around $\zeta=0$.
For example, one could use the Hermite functions basis, $e^{-|\zeta|^2/2} H_{n_1}(\zeta_1) H_{n_2}(\zeta_2) \ldots H_{n_d}(\zeta_d)$ \cite{Szego1939}, to approximate $\varphi(\zeta)$ away from $\zeta=0$.


\begin{theorem}\label{th:conv}
There exist $C>0$ and $\rho>1$, both depending only on $n$, $\delta_0$, $M_{\delta_0}$, and $k_{\rm B} T$, such that for any $m\in\bbN$ there exists $p_m\in\bbP_\perm\cap\bbP_\rot$ of degree $m$ such that
\begin{equation}\label{eq:conv}
\sup_{u:\,\max_i |u_i|\leq R} |V^\tb(u) - p_m(u)| < C \rho^{-m}.
\end{equation}
\end{theorem}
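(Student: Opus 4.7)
The plan is to establish the theorem in three stages: first extend $V^\tb$ to a holomorphic function on a complex polydisc $\calV_\delta^n$ for some $0<\delta\leq\delta_0$; then approximate it there by a polynomial $q_m$ of degree $m$ at a geometric rate using classical analytic-approximation theory; and finally symmetrise $q_m$ over $\calS_n\times O(d)$ to obtain an invariant polynomial $p_m\in\bbP_\perm\cap\bbP_\rot$ without degrading the approximation error. By Theorem \ref{th:main}, $p_m$ will automatically lie in the span of the $B_\balpha$, which is the substantive content the theorem buys us.

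For the analytic extension I would use the Cauchy-type matrix-function representation
\[
V^\tb(u)=\frac{1}{2\pi\i}\oint_{\Gamma}f^\tb(z)\bigl[(zI-H(u))^{-1}\bigr]_{00}\,dz,
\]
where $\Gamma$ is a closed contour enclosing the spectrum of $H(u)$ but avoiding the poles of $f^\tb$ at $\i\pi k_{\rm B}T(2\ell+1)$, $\ell\in\bbZ$. For real $u$ the matrix $H(u)$ is real symmetric of size $(n+1)\times(n+1)$, with spectrum contained in a real interval whose length is controlled by $n$ and $M_{\delta_0}$. For $u\in\calV_\delta^n$ with $\delta$ small enough, $H(u)$ becomes an analytic non-Hermitian perturbation, and a Bauer--Fike-type bound keeps its spectrum inside a strip of half-width strictly less than $\pi k_{\rm B}T$, so a fixed rectangular $\Gamma$ works uniformly in $u$. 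A Neumann-series estimate then gives a uniform bound on $\|(zI-H(u))^{-1}\|$ for $z\in\Gamma$, making the integrand jointly analytic in $(z,u)$ and $V^\tb$ holomorphic on $\calV_\delta^n$. Holomorphy on this polydisc implies, via truncation of a tensorised Chebyshev expansion together with Bernstein's lemma, that there is a polynomial $q_m$ of total degree $m$ with $\sup_{|u_i|\leq R}|V^\tb(u)-q_m(u)|\leq C\rho^{-m}$, with $\rho>1$ and $C$ depending only on $n$, $\delta_0$, $M_{\delta_0}$, $k_{\rm B}T$.

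The last step is the symmetrisation
\[
p_m(u):=\frac{1}{n!}\sum_{\sigma\in\calS_n}\int_{O(d)}q_m(Qu_{\sigma_1},\ldots,Qu_{\sigma_n})\,dQ,
\]
where $dQ$ is normalised Haar measure on $O(d)$. Because averaging a polynomial in $u$ against a compact group acting linearly produces a polynomial of the same degree, and because $V^\tb$ itself is already permutation- and rotation-invariant, $p_m\in\bbP_\perm\cap\bbP_\rot$ and the sup-error $C\rho^{-m}$ is preserved. The main obstacle I expect is the quantitative resolvent estimate for the non-Hermitian $H(u)$: one must decide how small $\delta$ must be, in terms of $n$, $M_{\delta_0}$ and $k_{\rm B}T$, so that the perturbed spectrum cannot reach the contour $\Gamma$, and it is this bookkeeping that pins down the constants $C$ and $\rho$ in \eqref{eq:conv}. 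Everything else in the argument is standard once this uniform analytic extension is secured.
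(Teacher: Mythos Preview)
Your proposal is correct and follows essentially the same route as the paper: the contour-integral representation of $V^\tb$, a Bauer--Fike spectral perturbation argument to control the complexified Hamiltonian, a tensorised Chebyshev/Bernstein approximation yielding the geometric rate, and a final symmetrisation. The only notable differences are cosmetic: the paper works on the union of ``slabs'' $\calU_\delta=\bigcup_i \calV_0^{i-1}\times\calV_\delta\times\calV_0^{n-i}$ (which is all that the Hackbusch--Khoromskij interpolation bound actually requires) rather than the full polydisc $\calV_\delta^n$, and it obtains permutation invariance for free from the symmetry of the Chebyshev--Gauss--Lobatto grid, so that only the $O(d)$-averaging is performed at the end; your double averaging over $\calS_n\times O(d)$ achieves the same result.
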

The proof of this theorem is based on the following result by Hackbusch and Khoromskij \cite{HackbuschKhoromskij2001}:
\begin{proposition}\label{prop:Khoromskij}
Let $\calE_\rho := \{\zeta\in\bbC : |z-1|+|z+1| \leq \rho+\rho^{-1}\}$, for some $\rho>1$, 
\[
\calE^{(j)}_\rho := [-1,1]^{j-1} \times \calE_\rho \times [-1,1]^{n-j-1},
\]
$f=f(z_1,\ldots,z_n)$ defined on the union of all $\calE^{(j)}_\rho$, and
\[
M_\rho(f) := \max_{1\leq j\leq n} \sup_{z\in\calE^{(j)}_\rho} |f(z)|.
\]
Let $p=p(z_1,\ldots,z_n)$ be the polynomial interpolant of $f$ of degree $m$ on the Chebyshev-Gauss-Lobatto nodes in $[-1,1]^n$.
Then
\[
\sup_{\max_i |z_i|\,\leq 1} |f(z) - p(z)| \leq 2 (1+2\pi^{-1} \log(n))^m M_\rho(f) \frac{\rho^{-m}}{\rho-1}.
\]
\qed
\end{proposition}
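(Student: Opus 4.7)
The plan is to (i) extend $V^\tb$ analytically into a complex polyellipse that fits the hypotheses of Proposition \ref{prop:Khoromskij}, (ii) invoke the proposition to obtain a polynomial interpolant with exponential error in the sup norm, and (iii) symmetrize this interpolant over $\calS_n$ and $O(d)$ to land in $\bbP_\perm\cap\bbP_\rot$ without inflating either the degree or the error. Essentially all of the mathematical content beyond the proposition lies in step (i); the remaining steps are bookkeeping.

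For step (i), since $\varphi$ is analytic on $\calV_{\delta_0}$, each matrix entry $H_{ij}(u)=\varphi(u_j-u_i)$ is analytic in $u$ whenever $u_j-u_i\in\calV_{\delta_0}$, so $H(u)$ depends analytically on $u$ on the product domain $\calV_{\delta_0}^n$. The matrix function $f^\tb(H)$ admits the Cauchy representation
\[
f^\tb(H(u)) = \frac{1}{2\pi\i}\oint_\Gamma f^\tb(\zeta)\,(\zeta I - H(u))^{-1}\,d\zeta,
\]
so $V^\tb(u)=(f^\tb(H(u)))_{0,0}$ is analytic in $u$ as soon as a fixed contour $\Gamma$ can be drawn that (a) encloses $\sigma(H(u))$ uniformly in $u$ and (b) avoids the poles $\zeta_k:=\i\pi k_{\rm B}T(2k+1)$ of $f^\tb$. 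For real $u$, $H(u)$ is real symmetric (taking $\varphi$ radial), so Gershgorin gives $\sigma(H(u))\subset[-nM_{\delta_0},nM_{\delta_0}]\subset\bbR$, bounded away from the purely imaginary $\zeta_k$. For a complex perturbation $u+\i v$ with $\|v\|$ small, Cauchy estimates on $\varphi$ over $\calV_{\delta_0}$ bound $\|H(u+\i v)-H(u)\|$ by a constant times $\|v\|$ with constant controlled by $M_{\delta_0}/\delta_0$, and Bauer--Fike then bounds the spectral displacement by the same quantity. Choosing $\rho>1$ close enough to $1$ and rescaling $[-1,1]^{nd}\to[-R-\delta,R+\delta]^{nd}$ with $\delta<\delta_0$ makes the pulled-back polyellipses $\calE^{(j)}_\rho$ (a) sit inside $\calV_{\delta_0}^n$ and (b) keep $\sigma(H(u))$ at distance $\geq\pi k_{\rm B}T/2$ from every $\zeta_k$. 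The admissible $\rho$ and the bound $M_\rho(V^\tb):=\max_j\sup_{\calE^{(j)}_\rho}|V^\tb|$ then depend only on $n$, $\delta_0$, $M_{\delta_0}$, and $k_{\rm B}T$, as required.

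Step (ii) applies Proposition \ref{prop:Khoromskij} (with the proposition's ``$n$'' read as $nd$) to the rescaled $V^\tb$, producing a Chebyshev--Gauss--Lobatto interpolant $\tilde p_m$ of degree $m$ satisfying
\[
\sup_{\max_i|u_i|\leq R}|V^\tb(u)-\tilde p_m(u)|\;\leq\;\frac{2(1+2\pi^{-1}\log(nd))^m M_\rho(V^\tb)\rho^{-m}}{\rho-1};
\]
replacing $\rho$ by a slightly smaller $\rho'\in(1,\rho)$ to absorb the logarithmic factor yields a clean $C\rho'^{-m}$ bound with the prescribed parameter dependence. For step (iii), set
\[
p_m(u) := \int_{O(d)}\frac{1}{n!}\sum_{\sigma\in\calS_n}\tilde p_m(Qu_\sigma)\,dQ,
\]
where $u_\sigma:=(u_{\sigma_1},\ldots,u_{\sigma_n})$ and $dQ$ is normalized Haar measure on $O(d)$. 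Haar-plus-$\calS_n$ averaging preserves polynomial degree and yields an element of $\bbP_\perm\cap\bbP_\rot$; since $V^\tb$ is itself invariant under both actions, $V^\tb(u)-p_m(u)$ is the same average of $V^\tb-\tilde p_m$, so the sup-norm error is preserved. The main obstacle is the quantitative spectral-gap step in (i): turning the Bauer--Fike estimate into an explicit lower bound on how close $\rho$ can be taken to $1$, purely in terms of $n$, $\delta_0$, $M_{\delta_0}$, and $k_{\rm B}T$, and propagating this through to a uniform bound on $M_\rho(V^\tb)$.
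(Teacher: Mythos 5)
There is a fundamental mismatch: the statement you were asked to prove is Proposition \ref{prop:Khoromskij} itself --- the multivariate Chebyshev--Gauss--Lobatto interpolation error bound for a function that is analytic in each variable separately on the Bernstein ellipse $\calE_\rho$ --- but your argument never proves it. Instead you \emph{invoke} the proposition in step (ii) as a black box and build around it the analytic-continuation and symmetrization argument for $V^\tb$. That is (a sketch of) the proof of Theorem \ref{th:conv}, which the paper gives separately in Steps 1--5 of its own proof (your Gershgorin/Bauer--Fike control of the spectrum, the contour avoiding the poles of $f^\tb$ at $\i\pi k_{\rm B}T(2k+1)$, and the Haar-plus-$\calS_n$ averaging all mirror that proof). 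As a proof of the proposition it is circular: you cannot use the interpolation estimate to establish the interpolation estimate.

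A genuine proof of Proposition \ref{prop:Khoromskij} would have to contain entirely different ingredients, none of which appear in your write-up: the one-dimensional Chebyshev interpolation error bound for a function analytic inside $\calE_\rho$ (via the decay $|a_k|\leq 2M_\rho\,\rho^{-k}$ of its Chebyshev coefficients, giving an error of order $2M_\rho\,\rho^{-m}/(\rho-1)$ on $[-1,1]$), a bound on the Lebesgue constant of the Chebyshev--Gauss--Lobatto nodes of the form $1+2\pi^{-1}\log(\cdot)$, and a telescoping (tensor-product) argument over the $n$ variables that applies the univariate estimate in one coordinate at a time while paying a Lebesgue-constant factor for the interpolation operators in the remaining coordinates --- this is where the factor $(1+2\pi^{-1}\log(\cdot))^{m}$ in the stated bound comes from. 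Note also that the paper itself does not prove this proposition; it is quoted from Hackbusch and Khoromskij, so the expected content here was precisely that univariate-plus-Lebesgue-constant argument, not the application to the tight-binding site energy.
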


An illustration of $\calE_\rho$ and $\calV_{\delta}$ is given in Figure \ref{fig:complex_regions}.
\begin{figure}[htb]
\hfill
\includegraphics[scale=1]{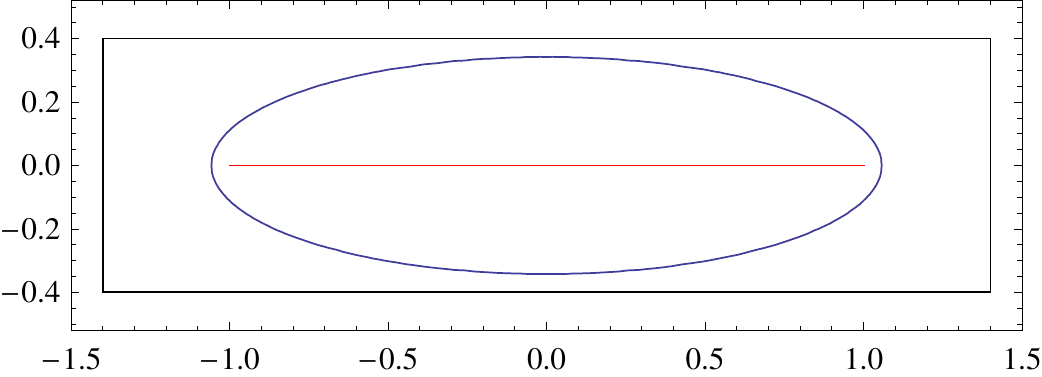}
\hfill
$\mathstrut$
\caption{%
Illustration of $\calV_\delta$ (black square) and $\calE_\rho$ (blue ellipse) on the complex plane for $R=1$, $\delta=0.4$, $\rho=1.4$. Both regions contain [-1,1] (red line) and for this choice of parameters $\calE_\rho \subset \calV_\delta$.
}
\label{fig:complex_regions}
\end{figure}

\begin{proof}[Proof of Theorem \ref{th:conv}]
The plan of the proof is to, after an auxiliary step (Step 1), successively obtain bounds on $\varphi$ (Step 2), on $H(u)$ (Step 3), and on the interpolating polynomial (Step 4), and then symmetrize this interpolating polynomial (Step 5).

\medskip\noindent
{\it Step 1.} First, we define
$M_{\delta} := \sup_{\zeta\in \calV_{\delta}} \varphi(\zeta)$
and note that by the Cauchy integral formula one can bound
\[
|\varphi'(z)| \leq
\left|
\oint_{\partial \calV_{2\delta}} \frac{\varphi(\zeta) {\rm d}\zeta}{(\zeta-z)^2}
\right|
=
\frac{(2 R+8\delta) M_{2\delta}}{\delta^2} =: M'_{\delta}
\qquad \forall z:\,|z|\leq\delta
.
\]
This is valid for any $\delta\leq\delta_0/2$.

\medskip\noindent
{\it Step 2.}
Next we let $\delta \in (0,\delta_0/2]$, which will be fixed later, and note that if $z\in\calV_{\delta}$ then
$|\Im(\varphi(z))| \leq \Im(\varphi(\Re(z))) + M'_{\delta_0} \Im(z) \leq M'_{\delta_0} \delta$ thanks to the intermediate value theorem.

\medskip\noindent
{\it Step 3.}
Next, following Proposition \ref{prop:Khoromskij}, define
\[
\calU_{\delta} := \cup_{i=1}^{n}
\calV_{0}^{i-1} \times \calV_{\delta} \times
\calV_{0}^{n-i-1}.
\]
Hence note that for $u\in\calU_{\delta}$, $H(u)$ is symmetric (possibly non-Hermitian) with at most $2n$ elements being non-real, which makes it easy to estimate using the Frobenius norm:
\[
\|\Im (H(u))\| \leq \sqrt{2n} M'_{\delta} \delta.
\]
By a spectrum perturbation argument (namely, the Bauer-Fike theorem \cite{BauerFike}) we have the corresponding bound on the spectrum:
\[
|\Im({\rm Sp}(H(u)))| \leq \sqrt{2n} M'_{\delta} \delta.
\]
The real part of the spectrum can be estimated directly through the norm:
\[
|\Re({\rm Sp}(H(u)))| \leq \|H(u)\| \leq n M_{\delta_0}.
\]

\medskip\noindent
{\it Step 4.}
We next bound $V^\tb(u)$ for $u\in\calU_{\delta}$.
If needed we decrease $\delta$ such that $\sqrt{2n} M'_{\delta} \delta < \frac{\pi}{3} k_{\rm B} T$ and use the following representation \cite{ChenOrtner1, Higham2008functions}:
\[
V^\tb(u) = -\frac{1}{2\pi\i} \oint_\gamma f^\tb(z) \big((H(u)-zI)^{-1}\big)_{0,0} {\rm d}z,
\]
where we take $\gamma := \partial \Omega$, where $\Omega=\{z\in\bbC : |\Re(z)|\leq nM_{\delta_0}+\frac{\pi}{3} k_{\rm B} T ,\, |\Im(z)| \leq \frac{2\pi}{3} k_{\rm B} T \}$.
The choice of the region $\Omega$ is such that any point $z\in\gamma$ is separated from any eigenvalue $\lambda$ of $H(u)$ by the distance $|z-\lambda|\geq \frac{\pi}{3} k_{\rm B} T$, and at the same time $f^\tb(z)$ is regular on $\overline{\Omega}$.
This allows us to estimate,
for $z\in\gamma$,
\[
\|(H(u)-zI)^{-1}\| \leq \big({\textstyle \frac{\pi}{3}} k_{\rm B} T\big)^{-1}
\]
and hence
\[
\sup_{u\in\calU_\delta} |V^\tb(u)| \leq \sup_{z\in\Omega} |f^{\tb}(z)| \, |\gamma| \, \big({\textstyle \frac{\pi}{3}} k_{\rm B} T\big)^{-1}
\leq C,
\]
where $C$ is a generic constant that depends only on $n$, $\delta_0$, $M_{\delta_0}$, and $k_{\rm B} T$.
It now remains to notice that there exists $\rho>1$ that depends on $\delta$ such that
\[
\{z\in\bbC : |z/R-1|+|z/R+1| \leq \rho+\rho^{-1}\}
\subset
\calV_{\delta}
.
\]
Hence Proposition \ref{prop:Khoromskij} applies to the function $f(z) = V^{\tb}(Rz)$ and yields the interpolating polynomial $\tilde{p}_m(u)$ such that
\[
\sup_{u:\,\max_i |u_i|\leq R} |V^\tb(u) - \tilde{p}_m(u)|
\leq
\sup_{u:\,\max_i |u_i|_\infty\leq R} |V^\tb(u) - \tilde{p}_m(u)|
\leq C \rho^{-m}.
\]
By construction $\tilde{p}_m\in \bbP_\perm$.
Indeed, the function $f(z) = V^{\tb}(Rz)$ is symmetric with respect to permutation of variables, and so is the Chebyshev-Gauss-Lobatto interpolations nodes on the domain $[-1,1]^n$. Hence, uniqueness of interpolation yields permutation symmetry of $\tilde{p}_m$.

\noindent {\it Step 5.}
We define
\[
p_m(u) := \frac{\int_{Q\in O(d)} \tilde{p}_m(Q u) {\rm d}Q}{\int_{Q\in O(d)} {\rm d}Q}
\in \bbP_\perm\cap\bbP_\rot
,
\]
(where ${\rm d}Q$ denotes the Haar measure)
and thanks to the rotation invariance of $V^\tb$ we recover \eqref{eq:conv}.
\end{proof}

It is worthwhile noting that the integration with respect to rotation was used only as a technical tool in the proof.
If we directly constructed an approximation $V(u)=\sum_{\alpha} c_\alpha B_\alpha(u)$ to $V^\tb(u)$, rather than using the Chebyshev-Gauss-Lobatto interpolation as an intermediate step, then $V(u)$ would be rotationally invariant by construction.

\subsection{Proof of Theorem \ref{th:main}}\label{sec:proof}


We start with stating the First Fundamental Theorem for the orthogonal group ${\rm O}(d)$ \cite{Weyl}:
\begin{theorem}
$p\in\bbP$ is rotation invariant if and only if it can be represented as a polynomial of $n (n+1)/2$ scalar variables of the form $r_{ij}(u) := u_i\cdot u_j$, where $1\leq i\leq j\leq n$.
\end{theorem}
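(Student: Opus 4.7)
The ($\Leftarrow$) direction is immediate: since $(Qu_i)\cdot(Qu_j) = u_i\cdot u_j$ for every $Q\in O(d)$, any polynomial in the scalars $r_{ij}$ is automatically rotation invariant. The entire content of the theorem lies in ($\Rightarrow$), and the plan is to reduce that implication to a multilinear statement via polarization and then identify the $O(d)$-invariant multilinear polynomials with invariant tensors.

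First, decompose $p$ into its multihomogeneous components in $(u_1,\ldots,u_n)$; each component is separately $O(d)$-invariant because the diagonal action preserves multidegree, so we may assume $p$ is multihomogeneous of some multidegree $(d_1,\ldots,d_n)$. Then apply total polarization: for each $i$, substitute $u_i \mapsto \sum_{k=1}^{d_i} t_{i,k}\, v_{i,k}$ and extract the coefficient of $\prod_{i,k} t_{i,k}$. This produces a polynomial $\tilde p$ in $N=d_1+\cdots+d_n$ vector arguments $v_{i,k}$ that is multilinear in each, is still $O(d)$-invariant (polarization commutes with the diagonal group action), and from which the original $p$ is recovered by the restitution $v_{i,k}=u_i$. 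It therefore suffices to prove the multilinear version: every $O(d)$-invariant multilinear polynomial $\tilde p(v_1,\ldots,v_N)$ is a linear combination of products $\prod_{\{a,b\}\in\pi}(v_a\cdot v_b)$ as $\pi$ ranges over perfect matchings of $\{1,\ldots,N\}$, and is identically zero when $N$ is odd; substituting back according to the polarization labels then turns each such product into a monomial in the $r_{ij}$.

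Writing $\tilde p(v_1,\ldots,v_N) = \sum_{j_1,\ldots,j_N=1}^{d} T_{j_1\cdots j_N}\, v_{1,j_1}\cdots v_{N,j_N}$ identifies $\tilde p$ with a tensor $T\in(\bbR^d)^{\otimes N}$, and $O(d)$-invariance of $\tilde p$ translates exactly to $Q^{\otimes N} T = T$ for every $Q\in O(d)$. The main obstacle---and the only step that uses nontrivial input---is the classical Weyl/Brauer classification of these invariant tensors: the invariant subspace is spanned by the matching tensors $\delta^{\pi}_{j_1\cdots j_N}:=\prod_{\{a,b\}\in\pi}\delta_{j_a j_b}$ indexed by perfect matchings $\pi$, and is zero when $N$ is odd. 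A clean route to this lemma is analytic, via the Reynolds projector $T \mapsto \int_{O(d)} Q^{\otimes N}\, T\,\mathrm dQ$ against the Haar measure: evaluating the integral entry by entry with Weingarten / Gaussian-moment calculus expresses each entry as an explicit weighted sum over perfect matchings, which is precisely the matching-tensor expansion. Since each $\delta^{\pi}$ corresponds to the multilinear form $\prod_{\{a,b\}\in\pi}(v_a\cdot v_b)$, combining this classification with the polarization step completes the proof.
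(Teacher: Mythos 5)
A point of orientation first: the paper does not prove this statement at all --- it is quoted as the First Fundamental Theorem of invariant theory for the orthogonal group with a citation to Weyl \cite{Weyl}, and is then used as a black box in the proof of Theorem \ref{th:main}. So your proposal has to be judged as an attempt at a self-contained proof of a classical theorem rather than against an in-paper argument. Your reduction steps are the standard ones and are correct: the easy direction via $(Qu_i)\cdot(Qu_j)=u_i\cdot u_j$; the splitting into multihomogeneous components (preserved by the diagonal $O(d)$-action); full polarization to a multilinear invariant in $N=d_1+\cdots+d_n$ vector variables with restitution recovering $p$ up to the nonzero factor $d_1!\cdots d_n!$ (valid in characteristic zero); the identification of a multilinear invariant with a tensor $T\in(\bbR^d)^{\otimes N}$ satisfying $Q^{\otimes N}T=T$; and the vanishing for odd $N$ via $Q=-I$.

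The gap sits exactly in the step you flag as the only nontrivial input. The claim that the $O(d)$-invariant tensors in $(\bbR^d)^{\otimes N}$ are spanned by the matching (Kronecker-delta) tensors is precisely the tensor form of the First Fundamental Theorem (Brauer's formulation); via the very polarization/restitution argument you just ran, it is equivalent to the statement being proved. As written, your argument therefore reduces the theorem to an equivalent classical theorem and cites it --- no more and no less than what the paper itself does. Worse, the proposed analytic shortcut is circular: the orthogonal Weingarten formula, which writes $\int_{O(d)}Q_{i_1j_1}\cdots Q_{i_Nj_N}\,{\rm d}Q$ as a weighted sum over pairs of perfect matchings, is itself derived from the fact that the invariant subspace (equivalently, the commutant of the $O(d)$-action) is spanned by matching tensors, and the Wick/Gaussian-moment calculus computes moments of Gaussian matrices, not of Haar-distributed orthogonal matrices; passing from one to the other is not an entry-by-entry computation. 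To make the proof genuinely self-contained you would need an honest argument for the spanning statement at this point, e.g.\ Weyl's proof via the Capelli identity, the Atiyah--Bott--Patodi-style induction on $d$ using the stabilizer $O(d-1)$ of a fixed vector, or an induction on the number of vector arguments starting from the fact that the invariants of a single vector are polynomials in $|v|^2$. If instead you are content to quote the classical classification, then your write-up is a correct and somewhat more informative version of what the paper does, since it makes explicit the polarization mechanism by which the polynomial statement follows from the tensor one.
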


We hence can identify a polynomial $p=p(u)\in\bbP_\rot$ with the respective polynomial $q=q(\br)\in\bbQ$, where $\bbQ$ is the set of polynomials of $n (n+1)$ scalar variables $\br=(r_{ij})_{1\leq \,i,j\,\leq n}$ (we lift the requirement $i\leq j$ for the ease of notation in what follows).


In order to proceed, we introduce the notation of composition of tuples:
\begin{equation}\label{eq:tuples_composition}
a_b := (a_{b_1}, \ldots, a_{b_m})
\qquad \text{for}~~ b=(b_1,\ldots,b_m) \subset \{1,\ldots,n\}^m \text{~~and~~} a=(a_1,\ldots,a_n).
\end{equation}
Hence define
\[
\bbQ_\perm := \{q\in\bbQ : q(r) \equiv q(r_{\sigma\sigma}) ~\forall \sigma\in\calS_n\}
\]
that corresponds to $\bbP_\perm$, where we likewise let $r_{\sigma\sigma} := (r_{\sigma_i\sigma_j})_{1\leq \,i,j\,\leq n}$.

%

We next formulate a rather intuitive result, essentially stating that $\bbQ_\perm$ can be spanned by symmetrizing all monomials of $\br$.
\begin{lemma}
\[
\bbQ_\perm =
{\rm Span}\Bigg\{
 \sum_{\sigma\in \calS_n} \prod_{i=1}^m \prod_{j=i}^m (r_{\sigma_i \sigma_j})^{\alpha_{ij}}
 :
 m\in\bbN,~
 m\leq n,~
 \balpha\in\bbN^{m\times m}
\Bigg\}
.
\]
\end{lemma}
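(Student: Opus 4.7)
The plan is to establish the two set inclusions separately. The right-to-left inclusion is the easier one: any polynomial of the form $F(\br) := \sum_{\sigma\in\calS_n} \prod_{i\leq j} (r_{\sigma_i\sigma_j})^{\alpha_{ij}}$ is automatically in $\bbQ_\perm$, because for any $\tau\in\calS_n$ the substitution $\br\mapsto \br_{\tau\tau}$ simply reindexes the summation over $\sigma$ (replacing $\sigma$ by $\tau\circ\sigma$), leaving the total sum unchanged.

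For the left-to-right inclusion, the idea is averaging. First I would expand an arbitrary $q\in\bbQ_\perm$ as a linear combination of monomials in the (upper-triangular) variables $r_{ij}$, $1\leq i\leq j\leq n$:
\[
q(\br) \;=\; \sum_{a} c_a \prod_{1\leq i\leq j\leq n} (r_{ij})^{a_{ij}},
\]
where the sum is finite and each $a=(a_{ij})$ is a symmetric matrix of nonnegative integers (with $a_{ij}$ defined only for $i\leq j$, extended symmetrically if needed). Next, I would apply the symmetrization operator $\mathrm{Sym}\colon q\mapsto \frac{1}{n!}\sum_{\sigma\in\calS_n} q(\br_{\sigma\sigma})$. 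Since $q\in\bbQ_\perm$, we have $\mathrm{Sym}(q)=q$, and by linearity
\[
q(\br) \;=\; \sum_a \frac{c_a}{n!} \sum_{\sigma\in\calS_n}\prod_{1\leq i\leq j\leq n} (r_{\sigma_i\sigma_j})^{a_{ij}}.
\]
This already writes $q$ as a linear combination of the symmetrized monomials appearing on the right-hand side of the lemma, with $m=n$.

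The only remaining bookkeeping is the range $m\leq n$ asserted in the statement (rather than just $m=n$). This is automatic: since taking $m=n$ is an allowed choice, the span on the right only grows by letting $m$ be smaller, so the identity above already establishes the inclusion. Conversely, a symmetrized monomial indexed by a strictly smaller $m\times m$ matrix $\balpha$ can be rewritten as a symmetrized monomial indexed by an $n\times n$ matrix by padding $\balpha$ with zeros (then $\prod_{i\leq j\leq m}r_{\sigma_i\sigma_j}^{\alpha_{ij}}$ is independent of $\sigma_{m+1},\ldots,\sigma_n$, so the sum over $\calS_n$ reduces, up to the factor $(n-m)!$, to a sum over injections $\{1,\ldots,m\}\hookrightarrow\{1,\ldots,n\}$). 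So the two parameterizations span the same subspace.

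There is no real obstacle here; the argument is the standard Reynolds-operator trick for finite group actions on polynomial rings. The only point to take care of is the asymmetry of storing $\br$ as upper-triangular versus symmetric, which merely amounts to writing the diagonal factors $(r_{ii})^{\alpha_{ii}}$ separately from the off-diagonal factors---this is already reflected in the definition \eqref{eq:alpha-prime} of $\alpha_i'$ that will be used in the next step of tying these symmetrized monomials to the basis polynomials $B_\balpha$.
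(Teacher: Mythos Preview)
Your proposal is correct and follows essentially the same approach as the paper: both use the Reynolds-operator identity $q=\frac{1}{n!}\sum_{\sigma\in\calS_n} q(\br_{\sigma\sigma})$ for $q\in\bbQ_\perm$, then expand $q$ in monomials and push the symmetrization through by linearity. Your write-up is in fact more careful than the paper's (which is terse to the point of being slightly ambiguous about what ``apply this identity to all monomials'' means), and your remark on the $m\leq n$ versus $m=n$ parameterization via zero-padding is a helpful clarification the paper omits.
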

\begin{proof}
If $q\in\bbQ_\perm$ then
\[
\frac{1}{n!}\sum_{\sigma\in\calS_n} q(r_{\sigma\sigma})
=
\frac{1}{n!}\sum_{\sigma\in\calS_n} q(r)
= q(r).
\]
It remains to apply this identity to all monomials $q(r) = \prod_{i=1}^m \prod_{j=i}^m r_{ij}^{\alpha_{ij}}$ to obtain the stated result.
\end{proof}

The next step towards proving the main result is the following lemma, where we denote $\calN := \{1,\ldots,n\}$:
\begin{lemma}
For $m\in\bbN$, $m\leq n$, and $\balpha\in\bbN^{m\times m}$,
\[
B_\balpha(u)
=
\sum_{\gamma\in\calN^m}
\prod_{i=1}^m
\prod_{j=i}^m
(u_{\gamma_i}\cdot u_{\gamma_j})^{\alpha_{ij}}
.
\]
\end{lemma}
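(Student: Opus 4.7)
The plan is to unfold the definition of $B_\balpha$ directly: first expand each moment tensor $M_{\alpha_{ii},\alpha_i'}$ as a finite sum of rank-one tensors, then push that sum outside the contraction by multilinearity, and finally evaluate the resulting contractions of outer powers. The resulting identification $|u_{\gamma_i}|^{2\alpha_{ii}}=(u_{\gamma_i}\cdot u_{\gamma_i})^{\alpha_{ii}}$ absorbs the diagonal entries of $\balpha$, while the off-diagonal entries produce the factors $(u_{\gamma_i}\cdot u_{\gamma_j})^{\alpha_{ij}}$.

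More concretely, first I would write
\[
M_{\alpha_{ii},\alpha_i'}(u) \;=\; \sum_{\gamma_i=1}^n |u_{\gamma_i}|^{2\alpha_{ii}}\, u_{\gamma_i}^{\otimes\alpha_i'}.
\]
Since the contraction operator $\aprod\balpha_{i=1}^{m}$ is multilinear in each of its $m$ tensor arguments, substituting these expansions and commuting the sums over $\gamma_1,\ldots,\gamma_m$ past the contraction yields
\[
B_\balpha(u) \;=\; \sum_{\gamma\in\calN^m}\Bigl(\prod_{i=1}^m |u_{\gamma_i}|^{2\alpha_{ii}}\Bigr)\;\aprod\balpha_{i=1}^{m} u_{\gamma_i}^{\otimes\alpha_i'}.
\]

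The heart of the argument is then to show that for any fixed $\gamma$,
\[
\aprod\balpha_{i=1}^{m} u_{\gamma_i}^{\otimes\alpha_i'} \;=\; \prod_{1\leq i<j\leq m}(u_{\gamma_i}\cdot u_{\gamma_j})^{\alpha_{ij}}.
\]
This is where I expect the only real bookkeeping to occur. The point is that $u_{\gamma_i}^{\otimes\alpha_i'}$ is completely symmetric in its $\alpha_i'$ indices, so the way the indices of the $i$-th tensor are partitioned among its pair-contractions with the other tensors is irrelevant: by the definition of $\calB$ in \eqref{eq:calMcalB}, each pair $(i,j)$ with $i<j$ uses exactly $\alpha_{ij}$ indices from $T^{(i)}$ and $\alpha_{ij}$ indices from $T^{(j)}$, which matches $\alpha_i'=\sum_{j\ne i}\alpha_{ij}$ on both sides. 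For each such pair, the contraction contributes
\[
\sum_{l_1,\ldots,l_{\alpha_{ij}}=1}^{d}(u_{\gamma_i})_{l_1}\cdots(u_{\gamma_i})_{l_{\alpha_{ij}}}\,(u_{\gamma_j})_{l_1}\cdots(u_{\gamma_j})_{l_{\alpha_{ij}}} \;=\; (u_{\gamma_i}\cdot u_{\gamma_j})^{\alpha_{ij}},
\]
and because the outer-power tensors factor as products over their indices, the full contraction factors as the stated product over pairs.

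Combining the two displays and using $|u_{\gamma_i}|^{2\alpha_{ii}}=(u_{\gamma_i}\cdot u_{\gamma_i})^{\alpha_{ii}}$ to absorb the scalar prefactor into the $j=i$ term of the product $\prod_{j=i}^{m}$ yields the formula claimed in the lemma. The only real obstacle is to write down the index bookkeeping in the contraction cleanly; once one observes that $u_{\gamma_i}^{\otimes\alpha_i'}$ is symmetric and rank-one in a tensor-product sense, the combinatorics collapses to the product of scalar products above.
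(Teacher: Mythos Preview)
Your proposal is correct and follows essentially the same approach as the paper: expand each $M_{\alpha_{ii},\alpha_i'}$ as a sum over particle indices, use multilinearity (the paper phrases this as the distributive law) to pull the sums over $\gamma_1,\ldots,\gamma_m$ outside the contraction, and then evaluate the contraction of the rank-one tensors $u_{\gamma_i}^{\otimes\alpha_i'}$ pairwise to obtain the factors $(u_{\gamma_i}\cdot u_{\gamma_j})^{\alpha_{ij}}$. The paper writes out the index bookkeeping in full detail rather than appealing to symmetry of the outer powers, but the argument is the same.
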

\begin{proof}
Before commencing with the proof, we note that we will use the distributive law of addition and multiplication in the following form:
\[
\prod_{j=1}^m \sum_{i\in\calI} f(i,j)
=
\sum_{\gamma\in\calI^m} \prod_{j=1}^m f(\gamma_j,j).
\]

Then
\begin{align*}
T^{(i)}_{\beta^{(1,i)}\ldots\beta^{(i-1,i)}\beta^{(i,i+1)}\ldots\beta^{(i,m)}}(u)
&=
\sum_{\gamma\in\calN}
|u_\gamma|^{2\alpha_{ii}}
\Bigg(
	\prod_{j=1}^{i-1} \big(u^{\otimes\alpha_{ji}}_{\gamma}\big)_{\beta^{(j,i)}}
\Bigg)
\Bigg(
	\prod_{j=i+1}^{m} \big(u^{\otimes\alpha_{ij}}_{\gamma}\big)_{\beta^{(i,j)}}
\Bigg)
\\ &=
\sum_{\gamma\in\calN}
|u_\gamma|^{2\alpha_{ii}}
\Bigg(
\prod_{j=1}^{i-1} \prod_{\ell=1}^{\alpha_{ji}} u_{\gamma,\beta^{(j,i)}_\ell}
\Bigg)
\Bigg(
\prod_{j=i+1}^{m} \prod_{\ell=1}^{\alpha_{ij}} u_{\gamma,\beta^{(i,j)}_\ell}
\Bigg)
.
\end{align*}

We recall the notations $\alpha'_i$ and $\calB$, introduced in \eqref{eq:alpha-prime} and \eqref{eq:calMcalB} respectively,
and hence express
\begin{align*}
\aprod\balpha_{i=1}^m M_{\alpha_{ii},\alpha_i'}(u)
=&
\sum_{\beta \in \calB}
\prod_{i=1}^m
T^{(i)}_{\beta^{(1,i)}\ldots\beta^{(i-1,i)}\beta^{(i,i+1)}\ldots\beta^{(i,m)}}
\\=&
\sum_{\beta \in \calB}
\prod_{i=1}^m
\sum_{\gamma\in\calN}
|u_\gamma|^{2\alpha_{ii}}
\Bigg(
	\prod_{j=1}^{i-1} \prod_{\ell=1}^{\alpha_{ji}} u_{\gamma,\beta^{(j,i)}_\ell}
\Bigg)
\Bigg(
\prod_{j=i+1}^{m} \prod_{\ell=1}^{\alpha_{ij}} u_{\gamma,\beta^{(i,j)}_\ell}
\Bigg)
\\=&
\sum_{\beta \in \calB}
\sum_{\gamma\in\calN^m}
\prod_{i=1}^m
|u_{\gamma_i}|^{2\alpha_{ii}}
\Bigg(
\prod_{j=1}^{i-1} \prod_{\ell=1}^{\alpha_{ji}} u_{{\gamma_i},\beta^{(j,i)}_\ell}
\Bigg)
\Bigg(
\prod_{j=i+1}^{m} \prod_{\ell=1}^{\alpha_{ij}} u_{{\gamma_i},\beta^{(i,j)}_\ell}
\Bigg)
\\=&
\sum_{\beta \in \calB}
\sum_{\gamma\in\calN^m}
\prod_{i=1}^m
|u_{\gamma_i}|^{2\alpha_{ii}}
\prod_{j=i+1}^{m}
\prod_{\ell=1}^{\alpha_{ij}}
\big(u_{{\gamma_i},\beta^{(i,j)}_\ell}\big)
\big(u_{{\gamma_j},\beta^{(i,j)}_\ell}\big)
\\=&
\sum_{\gamma\in\calN^m}
\prod_{i=1}^m
|u_{\gamma_i}|^{2\alpha_{ii}}
\prod_{j=i+1}^{m}
\sum_{~\beta \in \{1,\ldots,d\}^{\alpha_{ij}}~}
\prod_{\ell=1}^{\alpha_{ij}} 
\big(u_{{\gamma_i},\beta_\ell}\big)
\big(u_{{\gamma_j},\beta_\ell}\big)
\\=&
\sum_{\gamma\in\calN^m}
\prod_{i=1}^m
|u_{\gamma_i}|^{2\alpha_{ii}}
\prod_{j=i+1}^{m}
\prod_{\ell=1}^{\alpha_{ij}}
\sum_{\beta=1}^d
(u_{{\gamma_i},\beta})
(u_{{\gamma_j},\beta})
\\=&
\sum_{\gamma\in\calN^m}
\prod_{i=1}^m
|u_{\gamma_i}|^{2\alpha_{ii}}
\prod_{j=i+1}^{m}
\prod_{\ell=1}^{\alpha_{ij}}
u_{{\gamma_i}} \cdot u_{{\gamma_j}}
\\=&
\sum_{\gamma\in\calN^m}
\prod_{i=1}^m
|u_{\gamma_i}|^{2\alpha_{ii}}
\prod_{j=i+1}^{m}
\big(u_{{\gamma_i}} \cdot u_{{\gamma_j}}\big)^{\alpha_{ij}}
\\=&
\sum_{\gamma\in\calN^m}
\prod_{i=1}^m
\prod_{j=i}^{m}
\big(u_{{\gamma_i}} \cdot u_{{\gamma_j}}\big)^{\alpha_{ij}}
.
\end{align*}
\end{proof}

\begin{proof}[Proof of Theorem \ref{th:main}]
In view of the previous lemma, we can denote
\[
\tilde{\bbQ}_\perm =
{\rm Span}\Bigg\{
\sum_{\gamma\in\calN^m}
\prod_{i=1}^m \prod_{j=i}^m
(r_{\gamma_i \gamma_j})^{\alpha_{ij}}
 :
 m\in\bbN,~
 m\leq n,~
 \balpha\in\bbN^{m\times m}
\Bigg\}
\]
and, applying the earlier lemmas, formulate the statement of Theorem \ref{th:main} as $\tilde{\bbQ}_\perm = \bbQ_\perm$. The latter is an immediate corollary of the following more specialized result.
\end{proof}

\begin{lemma}\label{lem:main_lemma}
For $m\in\bbN$, $m\leq n$, denote
\begin{align*}
	\bbQ^{(m)}_{\perm}
&:= 
	{\rm Span}\Bigg\{
	 \sum_{\sigma\in \calS_n}
	 \prod_{i=1}^m \prod_{j=i}^m
	 (r_{\sigma_i \sigma_j})^{\alpha_{ij}}
	 :
	 \balpha\in\bbN^{m\times m}
	\Bigg\}
,\qquad\text{and}\\
	\tilde{\bbQ}^{(m)}_\perm
&:=
	{\rm Span}\Bigg\{
	\sum_{\gamma\in\calN^m}
	\prod_{i=1}^m \prod_{j=i}^m
	(r_{\gamma_i\gamma_j})^{\alpha_{ij}}
	 :
	 \balpha\in\bbN^{m\times m}
	\Bigg\}
.
\end{align*}
Then $\tilde{\bbQ}^{(m)}_\perm = \bbQ^{(m)}_\perm$.
\end{lemma}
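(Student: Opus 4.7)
\emph{Plan.} The two generating sets in the lemma differ only by whether the index tuple ranges over all of $\calN^m$ or only over tuples with pairwise distinct coordinates: since $(\sigma_1,\ldots,\sigma_m)$ for $\sigma\in\calS_n$ is always a distinct $m$-tuple attained $(n-m)!$ times, and $(n-m)!\neq 0$ because $m\leq n$, the space $\bbQ^{(m)}_\perm$ coincides with the span of ``distinct-tuple sums'' $\sum_{\gamma\text{ distinct in }\calN^m}P_\balpha(\gamma)$, where $P_\balpha(\gamma):=\prod_{i=1}^m\prod_{j=i}^m r_{\gamma_i\gamma_j}^{\alpha_{ij}}$. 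My plan is to relate these distinct-tuple sums to the unrestricted sums generating $\tilde{\bbQ}^{(m)}_\perm$ via the partition lattice on $\{1,\ldots,m\}$, which I denote $\Pi_m$.

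\emph{Nesting in $m$ and collapsed exponents.} For $k\leq m$, padding $\balpha\in\bbN^{k\times k}$ with zeros to $\tilde\balpha\in\bbN^{m\times m}$ makes $P_{\tilde\balpha}(\gamma)$ independent of $\gamma_{k+1},\ldots,\gamma_m$, so summing over either $\calN^m$ or distinct $m$-tuples multiplies the corresponding $k$-level generator by a nonzero scalar ($n^{m-k}$ or $(n-k)!/(n-m)!$, respectively). Hence $\bbQ^{(k)}_\perm\subseteq\bbQ^{(m)}_\perm$ and $\tilde{\bbQ}^{(k)}_\perm\subseteq\tilde{\bbQ}^{(m)}_\perm$. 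Next, for a partition $\pi=\{B_1,\ldots,B_{|\pi|}\}\in\Pi_m$, define a collapsed exponent matrix $A_\pi(\balpha)\in\bbN^{|\pi|\times|\pi|}$ by letting $A_\pi(\balpha)_{st}$ ($s\leq t$) be the sum of $\alpha_{ij}$ over all pairs $i\leq j$ whose block membership is $\{B_s,B_t\}$. If $\gamma\in\calN^m$ is constant on each block $B_s$ with value $\gamma'_s$, then the pointwise identity $P_\balpha(\gamma)=P_{A_\pi(\balpha)}(\gamma')$ holds.

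\emph{Partition-lattice decomposition and Möbius inversion.} Splitting $\calN^m$ by the exact coincidence pattern of $\gamma$, which is an element of $\Pi_m$, yields
\[
\sum_{\gamma\in\calN^m}P_\balpha(\gamma) \;=\; \sum_{\pi\in\Pi_m}\,\sum_{\substack{\gamma'\in\calN^{|\pi|}\\\gamma'\text{ distinct}}} P_{A_\pi(\balpha)}(\gamma'),
\]
which combined with the nesting gives $\tilde{\bbQ}^{(m)}_\perm\subseteq\bbQ^{(m)}_\perm$. For the reverse direction, Möbius inversion on $\Pi_m$ inverts this relation: the distinct-tuple sum with exponents $\balpha$ is expressed as an integer linear combination of unrestricted sums $\sum_{\gamma'\in\calN^{|\pi|}}P_{A_\pi(\balpha)}(\gamma')$ over $\pi\in\Pi_m$, each of which lies in $\tilde{\bbQ}^{(|\pi|)}_\perm\subseteq\tilde{\bbQ}^{(m)}_\perm$. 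This gives $\bbQ^{(m)}_\perm\subseteq\tilde{\bbQ}^{(m)}_\perm$ and completes the proof.

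\emph{Main obstacle.} The genuine work is purely bookkeeping: writing $A_\pi(\balpha)$ explicitly so that $P_\balpha(\gamma)=P_{A_\pi(\balpha)}(\gamma')$ holds literally requires being careful about which pairs $i\leq j$ collapse to an off-diagonal entry of $A_\pi(\balpha)$ (when $i,j$ lie in distinct blocks) versus a diagonal entry (when $i,j$ lie in the same block), together with the accounting forced by the half-range product $\prod_{i\leq j}$ and the symmetry $\alpha_{ij}=\alpha_{ji}$. Beyond this, the argument is a textbook change-of-basis between ``power-sum-like'' and ``monomial-symmetric-like'' generating systems indexed by $\Pi_m$, with invertibility of the intermediate scalars guaranteed by the hypothesis $m\leq n$.
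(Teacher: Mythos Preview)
Your argument is correct. Both you and the paper rest on the same core identity---splitting $\sum_{\gamma\in\calN^m}$ according to the coincidence pattern of $\gamma$ and then collapsing the exponent matrix (your $A_\pi(\balpha)$ is exactly the paper's $\alpha^{(\gamma)}$, and the paper's set $\Gamma$ of $\calS_n$-orbit representatives on $\calN^m$ is in natural bijection with your partition lattice $\Pi_m$). Where you diverge is in how the two inclusions are extracted from that identity. The paper proceeds by induction on $m$: it picks one generator $\tilde q$ of $\tilde{\bbQ}^{(m)}_\perm$ and one generator $q$ of $\bbQ^{(m)}_\perm$, shows that $\tilde q - q$ lands in $\bbQ^{(m-1)}_\perm=\tilde{\bbQ}^{(m-1)}_\perm$ (the equality being the induction hypothesis), and concludes using the nestings. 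You instead bypass induction entirely: the forward inclusion $\tilde{\bbQ}^{(m)}_\perm\subseteq\bbQ^{(m)}_\perm$ drops out of the decomposition directly, and the reverse inclusion is obtained in one stroke by M\"obius inversion on $\Pi_m$, expressing each distinct-tuple generator as an integer combination of unrestricted generators at levels $|\pi|\leq m$. What your route buys is a cleaner, non-recursive argument that makes explicit the familiar ``power-sum vs.\ monomial-symmetric'' change of basis lurking beneath the statement; what the paper's route buys is a fully self-contained proof that does not invoke the M\"obius function of $\Pi_m$ as an external ingredient.
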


Before we prove this lemma, we give two auxiliary results.
\begin{lemma}
We equip $\calN^m$ with the lexicographical order and hence denote by $\Gamma := \{\gamma\in\calN^m : \gamma = \min\{\sigma_\gamma : \sigma\in\calS_n\}\}$ the set of representatives of equivalence classes $\{\sigma_\gamma : \sigma\in\calS_n\}$, where $\sigma_\gamma$ is understood as composition of tuples \eqref{eq:tuples_composition}. Also let
$C_{\gamma} = \#\{\sigma_\gamma : \sigma\in\calS_n\}$, where $\#$ denotes the number of elements in a set.
Then
\[
	\sum_{\gamma\in\calN^m}
	\prod_{i=1}^m \prod_{j=i}^m
	(r_{\gamma_i\gamma_j})^{\alpha_{ij}}
=
	\sum_{\gamma\in\Gamma}
	\frac{C_\gamma}{n!}
	\sum_{\sigma\in\calS_n}
	\prod_{i=1}^m \prod_{j=i}^m
	(r_{\sigma_{\gamma_i}\sigma_{\gamma_j}})^{\alpha_{ij}}
.
\]
\end{lemma}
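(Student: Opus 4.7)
My plan is to recognize this lemma as a standard orbit decomposition identity: the symmetric group $\calS_n$ acts on $\calN^m$ by the composition rule $\sigma\cdot\gamma := \sigma_\gamma = (\sigma_{\gamma_1},\ldots,\sigma_{\gamma_m})$, the set $\Gamma$ is precisely a transversal for this action (picking lexicographically minimal representatives in each orbit), and $C_\gamma$ is the size of the orbit through $\gamma$. The identity then follows by partitioning $\calN^m$ into orbits and applying orbit–stabilizer to convert an orbit sum into an average over $\calS_n$.

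I would execute this in three short steps. First, verify that $(\sigma,\gamma)\mapsto \sigma_\gamma$ is a genuine group action on $\calN^m$: associativity follows from the elementary identity $(\sigma\tau)_{\gamma_k} = \sigma_{\tau_{\gamma_k}}$, so $(\sigma\tau)\cdot\gamma = \sigma\cdot(\tau\cdot\gamma)$. Second, invoke orbit–stabilizer: the stabilizer of $\gamma$ has order $n!/C_\gamma$, so for any function $f:\calN^m\to\bbR$,
\[
\sum_{\sigma\in\calS_n} f(\sigma_\gamma) \;=\; \frac{n!}{C_\gamma}\sum_{\gamma'\in\,\mathrm{orbit}(\gamma)} f(\gamma').
\]
Third, specialize this to $f(\gamma')=\prod_{i=1}^m\prod_{j=i}^m (r_{\gamma'_i\gamma'_j})^{\alpha_{ij}}$. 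Multiplying by $C_\gamma/n!$ cancels the prefactor, and summing the resulting orbit sums over $\gamma\in\Gamma$ reconstructs a single sum over $\calN^m$, because $\Gamma$ indexes the orbits and the orbits partition $\calN^m$. This yields exactly the claimed identity.

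The only mildly subtle point, and what I expect to be the main bookkeeping obstacle, is making sure the substitution inside the monomial is coherent: one must check that $f(\sigma_\gamma) = \prod_{i,j}(r_{\sigma_{\gamma_i}\sigma_{\gamma_j}})^{\alpha_{ij}}$, which is simply the statement that evaluating $f$ at the composed tuple $\sigma_\gamma$ indexes the $r$-variables by $\sigma_{\gamma_i}$ and $\sigma_{\gamma_j}$ — matching the right-hand side of the lemma verbatim. Everything else is formal. Once the action and the orbit–stabilizer count are in place, the identity is really just a rearrangement of the sum $\sum_{\gamma\in\calN^m}=\sum_{\gamma\in\Gamma}\sum_{\gamma'\in\mathrm{orbit}(\gamma)}$ combined with the averaging identity above.
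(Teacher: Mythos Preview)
Your proposal is correct and is essentially the same orbit-decomposition argument the paper gives: the paper observes that each $\sigma\in\calS_n$ induces a bijection $\gamma\mapsto\sigma_\gamma$ on $\calN^m$, averages the left-hand side over $\sigma$ to obtain $\sum_{\gamma\in\calN^m}\frac{1}{n!}\sum_{\sigma}f(\sigma_\gamma)$, and then groups the $\gamma$'s by orbit to pick up the factor $C_\gamma$. Your use of orbit--stabilizer packages the same count a bit more crisply, but the underlying idea and the bookkeeping are identical.
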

\begin{proof}
Any $\sigma\in\calS_n$ induces a one-to-one mapping $\gamma\mapsto\sigma_\gamma$ on $\calN^m$.
Hence 
\[
	\sum_{\gamma\in\calN^m}
	\prod_{i=1}^m \prod_{j=i}^m
	(r_{\gamma_i\gamma_j})^{\alpha_{ij}}
=
	\sum_{\gamma\in\calN^m}
	\prod_{i=1}^m \prod_{j=i}^m
	(r_{\sigma_{\gamma_i}\sigma_{\gamma_j}})^{\alpha_{ij}}
\qquad\forall \sigma\in\calS_n
,
\]
therefore
\[
	\sum_{\gamma\in\calN^m}
	\prod_{i=1}^m \prod_{j=i}^m
	(r_{\gamma_i\gamma_j})^{\alpha_{ij}}
=
	\sum_{\gamma\in\calN^m}
	\frac{1}{n!}
	\sum_{\sigma\in\calS_n}
	\prod_{i=1}^m \prod_{j=i}^m
	(r_{\sigma_{\gamma_i}\sigma_{\gamma_j}})^{\alpha_{ij}}
.
\]
It remains to group up the terms for which $\sigma_\gamma$ is the same.
\end{proof}

The next auxiliary result is proved by a trivial combinatorial argument, essentially expressing that all elements of $\Gamma$ other than $(1,\ldots,m)\in\Gamma$ have repeated values.
\begin{proposition}\label{prop:remaining}
Let $m \geq 1$. Then $\Gamma = \{(1,\ldots,m)\}\cup \Gamma'$, where $\Gamma' := \{\gamma\in\Gamma: \max_i\gamma_i\leq m-1\}$.
\qed
\end{proposition}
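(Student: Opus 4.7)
The plan is to exploit the characterization of lexicographically minimal orbit representatives under the $\calS_n$-action on $\calN^m$ by tracking when new values ``first appear'' in the tuple $\gamma$. The essential claim is that $\gamma \in \Gamma$ if and only if the sequence of first occurrences of distinct values, read left to right, produces $1, 2, 3, \ldots$ in order. Once this is established, counting first occurrences immediately bounds $\max_i \gamma_i \leq m$, and the equality case pins down $\gamma = (1,\ldots,m)$ uniquely.

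First I would prove the characterization. Suppose $\gamma \in \Gamma$ and let $k = \max_i \gamma_i$. For a contradiction, assume that at some position $i$ the entry $\gamma_i$ is a value $v$ that has not appeared in $\gamma_1, \ldots, \gamma_{i-1}$, yet $v > 1 + \max(\gamma_1, \ldots, \gamma_{i-1})$ (with the convention that this max is $0$ when $i=1$). Then the value $v' := 1 + \max(\gamma_1, \ldots, \gamma_{i-1})$ satisfies $v' < v$ and does not appear anywhere in $\gamma$ at a position $< i$; applying the transposition $\sigma = (v\ v') \in \calS_n$ leaves $\gamma_1, \ldots, \gamma_{i-1}$ fixed but replaces $\gamma_i = v$ by $v'$, producing $\sigma_\gamma$ strictly lex-smaller than $\gamma$, contradicting lex-minimality. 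Hence each new value that appears in $\gamma$ equals one plus the previous running maximum, so the distinct values occurring in $\gamma$ are exactly $\{1, 2, \ldots, k\}$ and their first occurrences appear in the natural order.

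Second, I would deduce the decomposition. Since each of the $k$ distinct values $1, 2, \ldots, k$ must have a first-occurrence position among the $m$ positions of $\gamma$, we get $k \leq m$, i.e., $\max_i \gamma_i \leq m$. If $\max_i \gamma_i \leq m-1$, then $\gamma \in \Gamma'$ by definition. Otherwise $\max_i \gamma_i = m$, which forces the $m$ first-occurrence positions to be all of $1, 2, \ldots, m$, meaning $\gamma_i$ is a new value at every position $i$; combined with the ordering of first occurrences, this yields $\gamma_i = i$ for all $i$, i.e., $\gamma = (1, \ldots, m)$. Conversely, $(1,\ldots,m)$ is clearly lex-minimal in its orbit and so lies in $\Gamma$. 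This proves $\Gamma = \{(1,\ldots,m)\} \cup \Gamma'$.

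The only genuinely delicate step is the transposition argument establishing the characterization of $\Gamma$; the counting step that follows is purely combinatorial. I would expect the writeup to be very short since the proposition is tagged with \qed in the statement, indicating the authors consider it immediate from the definitions.
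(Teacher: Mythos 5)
Your argument is correct and matches the paper's intent: the paper omits the proof, remarking only that it is a ``trivial combinatorial argument'' based on the fact that every element of $\Gamma$ other than $(1,\ldots,m)$ has repeated values, and your first-occurrence (restricted-growth) characterization of lex-minimal orbit representatives is precisely the standard way to make that remark rigorous. No gaps; the only implicit point is the ambient hypothesis $m\leq n$ (needed for $(1,\ldots,m)\in\calN^m$ and for the transposition to exist), which holds throughout the paper's setting.
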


\begin{proof}[Proof of Lemma \ref{lem:main_lemma}]
We argue by induction over $m$. For $m=0$ the statement is obvious since $\bbQ^{(0)}_\perm = \tilde{\bbQ}^{(0)}_\perm = {\rm Span}\{1\}$, therefore we only need to prove the induction step.

Now, for $m\in\bbN$, $m\leq n$, we choose an arbitrary $\alpha$ and let
\begin{align*}
q(\br) := 
	 \sum_{\sigma\in S_n}
	 \prod_{j,k=1}^m
	 (r_{\sigma_j \sigma_k})^{\alpha_{jk}}
&\in\bbQ_\perm
,\qquad\text{and} \\
\tilde{q}(\br) := \frac{n!}{C_{\{1,\ldots,m\}}}
	\sum_{\gamma\in\calN^m}
	\prod_{j,k=1}^m
	(r_{\gamma_j\gamma_k})^{\alpha_{jk}}
&\in\tilde{\bbQ}_\perm
.
\end{align*}
These $q(\br)$ and $\tilde{q}(\br)$ span $\bbQ^{(m)}_\perm$ and $\tilde{\bbQ}^{(m)}_\perm$, respectively.
We aim to show that $q(\br) - \tilde{q}(\br) \in \bbQ^{(m-1)}_\perm = \tilde{\bbQ}^{(m-1)}_\perm$.
This will prove the result considering that, by definition, $\bbQ^{(m-1)}_\perm \subseteq \bbQ^{(m)}_\perm$ and $\tilde{\bbQ}^{(m-1)}_\perm \subseteq \tilde{\bbQ}^{(m)}_\perm$.

Indeed, in view of the two previous results, recall the definition $\Gamma' := \{\gamma\in\Gamma: \max_i\gamma_i\leq m-1\}$, and write
\begin{align*}
\tilde{q}(\br) - q(\br)
&=
	\frac{n!}{C_{\{1,\ldots,m\}}}
	\sum_{\gamma\in\Gamma}
	\frac{C_\gamma}{n!}
	\sum_{\sigma\in\calS_n}
	\prod_{i=1}^m \prod_{j=i}^m
	(r_{\sigma_{\gamma_i}\sigma_{\gamma_j}})^{\alpha_{ij}}
-
	 \sum_{\sigma\in S_n}
	\prod_{i=1}^m \prod_{j=i}^m
	 (r_{\sigma_i \sigma_j})^{\alpha_{ij}}
\\ &=
	\sum_{\gamma\in\Gamma'}
	\frac{C_\gamma}{C_{\{1,\ldots,m\}}}
	\sum_{\sigma\in\calS_n}
	\prod_{i=1}^m \prod_{j=i}^m
	(r_{\sigma_{\gamma_i}\sigma_{\gamma_j}})^{\alpha_{ij}}
.
\end{align*}
Next, we denote
\[
\alpha^{(\gamma)}_{k\ell}:= \sum_{\substack{1\leq i\leq j\leq m \\ \gamma_i = k,~ \gamma_j = \ell}} \alpha_{ij}
\]
and hence express
\begin{align*}
\tilde{q}(\br) - q(\br)
&=
	\sum_{\gamma\in\Gamma'}
	\frac{C_\gamma}{C_{\{1,\ldots,m\}}}
	\sum_{\sigma\in\calS_n}
	\prod_{k=1}^{m-1} \prod_{\ell=i}^{m-1}
	(r_{\sigma_{k}\sigma_{\ell}})^{\alpha^{(\gamma)}_{k\ell}}
.
\end{align*}
Note that the upper limit of both products is $m-1$ thanks to the definition of $\Gamma'$ (and, of course, Proposition \ref{prop:remaining}).
Also, note that we used the fact that, from the definition of $\alpha^{(\gamma)}$, $\alpha^{(\gamma)}_{k\ell}=0$ whenever $k>\ell$.
Since $\sum_{\sigma\in\calS_n}
	\prod_{k=1}^{m-1} \prod_{\ell=i}^{m-1}
	(r_{\sigma_{k}\sigma_{\ell}})^{\alpha^{(\gamma)}_{k\ell}}$
by definition belongs to $\bbQ^{(m-1)}_\perm$ for each $\gamma$, we finally derive that
\[
\tilde{q}(\br) - q(\br)\in
\bbQ^{(m-1)}_\perm
=
\tilde{\bbQ}^{(m-1)}_\perm
.
\]
This concludes the proof of the induction step.
\end{proof}

\section{Practical Implementation}\label{sec:implementation}

%
%
%

The representation of the interatomic potential through polynomials, outlined in Section \ref{sec:representation}, does not satisfy the {\bf (R3)} property needed for the practical implementation.
Hence, as the next step we modify the interatomic potential to satisfy this property.
After this, in Section \ref{sec:implementation:algorithm} we discuss the steps needed to compute $B_\alpha(u)$, and in Section \ref{sec:implementation:training} we describe the algorithms we used for fitting the potentials.

First, notice that for a fixed $\nu$, a linear combination of moment tensors $M_{\mu,\nu}(u)$ is a polynomial of $|u_i|^2$ multiplied by $u_i^{\otimes\nu}$.
The space of polynomials of $|u_i|^2$ can be substituted with any other space of functions (which, for generality, can be made dependent on $\nu$), provided that they can represent any regular function of $|u|$, i.e.,
\begin{equation}\label{eq:Mtilde}
\tilde{M}_{\mu,\nu}(u) := \sum_{i=1}^n f_{\mu,\nu}(|u_i|) u_i^{\otimes \nu},
\end{equation}
where, e.g., $f_{\mu,\nu}(r) = r^{-\mu-\nu} f_\cut(r)$ or $f_\mu(r) = e^{-k_\mu r} f_\cut(r)$, $k_\mu>0$ is some sequence of real numbers, and $f_\cut(r)$ is some cutoff function such that $f_\cut(r)=0$ for $r\geq R_\cut$.
Here $f_{\mu,\nu}(r)$ plays essentially the same role as the ``radial symmetry functions'' in the NNP \cite{Behler2014review} or the radial basis functions in \cite{BartokKondorCsanyi2013descriptors}.
We then let
\[
\tilde{B}_{\alpha} (u)
:=
\aprod\balpha_{i=1}^k \tilde{M}_{\alpha_{ii},\alpha_i'}(u)
\]
(cf.\ \eqref{eq:Balpha}) and define
the interatomic potential by
\[
V(u) = \sum_{\alpha\in A} c_{\alpha} \tilde{B}_{\alpha}(u),
\]
where $A$ is a set of matrix-valued indices fixed {\it a priori} and $c_{\alpha}$ is the set of coefficients found in the training stage.
For the rest of the paper we will omit tildes in $\tilde{M}_{\bullet,\bullet}$ and $\tilde{B}_\bullet$.


\subsection{Computing the Energy and Forces}\label{sec:implementation:algorithm}

Next, we discuss the steps needed to compute the interatomic potential and its derivatives for a given neighborhood $u$.
The computation consists of two parts, the precomputation (offline) step and the evaluation (online) step.
The precomputation step accepts the set $A$ of values of $\alpha$ as an input and generates the data for the next step, which is the efficient calculation of $B_\alpha(u)$ for a given neighborhood $u$.

Before we proceed, we make two observations.
\begin{enumerate}
\item 

The elements of the tensors $\tildeM_{\mu,\nu}$ are the ``moments'' of the form
\[
m_{\mu,\beta}(u) := \sum_{i=1}^n f_{\mu,\nu}(|u_i|) \prod_{j=1}^d u_{i,j}^{\beta_j},
\]
where $\beta$ is a multiindex such that $|\beta|=\nu$.
The higher the dimension is, the more repeated moments each $\tildeM_{\mu,\nu}$ contains (e.g., each matrix $\tildeM_{\mu,2}$ has 9 elements, out of which at most 6 may be different due to the symmetricity of $\tildeM_{\mu,2}$).

\item 

The scalar functions $B_{\alpha}$ consist of products of $\tildeM_{\mu, \nu}$ (which means that the elements of $B_{\alpha}$ are linear combinations of products of $m_{\mu,\beta}$).
Differentiating products of two terms is easier than products of three or more terms.
Hence it will be helpful to have a representation of $B_\alpha$ as a product of two tensors.

To that end, we extend the definition of the product $\aprod{\alpha}$ by allowing the result to be a tensor of nonzero dimension:
\[
\bigg(
\aprod{\alpha}_{i=1}^{k} T^{(i)}
\bigg)_{\beta^{(1,1)}\ldots\beta^{(k,k)}}
:=
\sum_{\beta\in\calB}
\prod_{i=1}^k T^{(i)}_{\beta^{(i,1)}\ldots\beta^{(i,k)}},
\]
where each $T^{(i)}$ is a tensor of dimension $\sum_{j=1}^{k} \alpha_{ij}$ (here we let $\alpha_{ij}=\alpha_{ji}$), $\calB$ is defined in \eqref{eq:calMcalB}, and we make a convention that $\beta^{(i,j)}:=\beta^{(j,i)}$ for $i>j$.

Hence we define
\[ 
\hat{B}_{\bar{\alpha},\alpha}(u)
:=
\aprod{\alpha}_{i=1}^k \tildeM_{\bar{\alpha}_i, \sum_{j=1}^k \alpha_{ij}}(u),
\]
parametrized by $k\in\bbN$, $\bar{\alpha}\in\bbN^k$ and a symmetric matrix  $\alpha\in\bbN^{k\times k}$.
$\hat{B}_{\bar{\alpha},\alpha}(u)$ is a tensor of dimension $\sum_{i=1}^k \alpha_{ii}$.
Clearly if $\bar{\alpha}_i = \alpha_{ii}$ for all $i$ then $\hat{B}_{\bar{\alpha},\alpha} = B_{\alpha}$, which makes the collection of tensors $\hat{B}_{\bar{\alpha},\alpha}$ a generalization of $B_\alpha$.

Next, consider $\hat{B}_{\bar{\alpha},\alpha}(u)$ for some $\bar{\alpha}\in\bbN^k$ and $\alpha\in\bbN^{k\times k}$, fix $1\leq I \leq k$, and denote
$\bar{\beta} = (\bar{\alpha}_1,\ldots,\bar{\alpha}_I)$,
$\bar{\gamma} = (\bar{\alpha}_{I+1},\ldots,\bar{\alpha}_k)$,

\begin{align*}
\beta
&:=
\begin{pmatrix}
\alpha_{11} + \sum_{i=I+1}^k \alpha_{i1} & \alpha_{12} & \cdots & \alpha_{1I} \\
\alpha_{12} & \alpha_{22} + \sum_{i=I+1}^k \alpha_{i2} & \cdots & \alpha_{2I} \\
\vdots & \vdots & \ddots & \vdots \\
\alpha_{1I} & \alpha_{2I} & \cdots & \alpha_{II} + \sum_{i=I+1}^k \alpha_{iI}
\end{pmatrix},
\qquad\text{and}
\\
\gamma
&:=
\begin{pmatrix}
\alpha_{I+1,I+1} + \sum_{i=1}^I \alpha_{i,I+1} & \alpha_{I+1,I+2} & \cdots & \alpha_{I+1,k} \\
\alpha_{I+1,I+2} & \alpha_{I+2,I+2} + \sum_{i=1}^I \alpha_{i,I+2} & \cdots & \alpha_{I+2,k} \\
\vdots & \vdots & \ddots & \vdots \\
\alpha_{I+1,k} & \alpha_{I+2,k} & \cdots & \alpha_{kk} + \sum_{i=1}^I \alpha_{i,k}
\end{pmatrix}.
\end{align*}

One can then express the elements of $\hat{B}_{\bar{\alpha},\alpha}(u)$ through products of elements of $\hat{B}_{\bar{\beta},\beta}(u)$ and $\hat{B}_{\bar{\gamma},\gamma}(u)$.
Note that by reordering the rows and columns of $\alpha$ and $\beta$, one can generate many different ways of representing $\hat{B}_{\bar{\alpha},\alpha}(u)$.
When doing computations, one should exercise this freedom in such a way that the resulting tensors are of minimal dimension so that the total computation cost is reduced.

Finally, note that even if $\hat{B}_{\bar{\alpha},\alpha}(u)$ was scalar-valued, $\hat{B}_{\bar{\beta},\beta}(u)$ and $\hat{B}_{\bar{\gamma},\gamma}(u)$ do not have to be scalar-valued---this motivates the need to introduce the tensor-valued basis functions $\hat{B}_{\bar{\alpha},\alpha}(u)$.
\end{enumerate}

\subsubsection*{Precomputation}
The precomputation step is hence as follows (we keep the argument $u$ of, e.g., in $B_\alpha(u)$ to match the above notation, however, the particular values of $u$ never enter the precomputation step)
\begin{itemize}
\item
	Starting with the set of tensors $B_{\alpha}(u)$ (indexed by $\alpha\in A$), establish their correspondence to $\hat{B}_{\bar{\alpha},\alpha}(u)$ and recursively represent each $\hat{B}_{\bar{\alpha},\alpha}(u)$ through some $\hat{B}_{\bar{\beta},\beta}(u)$ and $\hat{B}_{\bar{\gamma},\gamma}(u)$ as described above.
	In each case, out of all such products choose the one with the minimal sum of the number of dimensions of these two tensors.
\item
	Enumerate all the elements of all the tensors $\hat{B}_{\bar{\alpha},\alpha}(u)$ as $b_i(u)$ ($i$ is the ordinal number of the corresponding element).
	
\item
	Represent each $b_i(u)$ as either $b_i(u) = m_{\mu_i, \beta_i}(u)$ or $b_i(u) = \sum_{j=1}^{J_i} c_j b_{\ell_j}(u) b_{k_j}(u)$.

\item
	Output the resulting
	\begin{itemize}
	\item[(1)] correspondence of $B_\alpha(u)$ to $b_i(u)$,
	\item[(2)] $\mu_i$ and $\beta_i$, and
	\item[(3)] tuples of $(i,c,\ell,k)$ corresponding to $b_i(u) = \sum_{j=1}^{J_i} c_j b_{\ell_j}(u) b_{k_j}(u)$.
	\end{itemize}
\end{itemize}

\subsubsection*{Evaluation}

The evaluation step, as written out below, accepts $u$ as an input and evaluates $B_\alpha(u)$ using the precomputed data (described above).
\begin{enumerate}
\item For a given $u$, calculate all $m_{\mu_i,\beta_i}(u)$.
\item Then calculate all other $b_i(u)$ using the tuples of $(i,c,\ell,k)$.
\item Finally, pick those $b_i(u)$ that correspond to scalar $B_\alpha(u)$ (as opposed to non-scalar $\hat{B}_{\bar{\alpha}, \alpha}$).
\end{enumerate}

It remains to form the linear combination of $B_\alpha(u)$ with the coefficients obtained from a linear regression (training), and then sum up all these linear combinations for all atomic environments to obtain the interatomic interaction energy of a given atomistic system.
The forces are computed by reverse-mode differentiation of the energy with respect to the atomic positions \cite{backprop}.

\subsection{Training}\label{sec:implementation:training}

Once the set $A$ of values of $\alpha$ is fixed, we need to determine the coefficients $c_{\alpha}$.
This is done with the regularized (to avoid overfitting \cite{Alpaydin2014MachineLearning}) linear regression in the following way.

Let a database of atomic configurations $X = \{x^{(k)}:k=1,\ldots,K\}$, where $x^{(k)}$ is of size $N^{(k)}$, be given together with their reference energies and forces, $E^\q(x^{(k)}) = E^{(k)}$ and $-\nabla E^\q(x^{(k)}) = f^{(k)}$.
We form an overdetermined system of linear equations on $c_{\alpha}$,
\begin{align*}
\sum_{i=1}^{N^{(k)}} \sum_{\alpha\in A} c_\alpha B(Dx^{(k)}_i) &= E^{(k)}
,
\\
\frac{\partial}{\partial x^{(k)}_j}\sum_{i=1}^{N^{(k)}} \sum_{\alpha\in A} c_\alpha B(Dx^{(k)}_i) &= -f^{(k)}_j,
\end{align*}
(cf.\ \eqref{eq:D_def} and \eqref{eq:E_def}), which we write in the matrix form $X c = g$.
These equations may be ill-conditioned, hence a regularization must be used.
We tried three versions of regularization, namely the $\ell_p$ regularization with $p\in\{0,1,2\}$, all described below.
All three can be written as
\[
\text{find~~~~} \min_c \|Xc-g\|^2
\text{~~~~subject to~~~~}
\|c\|_{\ell_p}^2 \leq t,
\]
where $t$ is the regularization parameter and $\|c\|_{\ell_0}$ is defined as the number of nonzero entires in $c$.
For $p\geq 1$ this can be equivalently rewritten as
\[
\text{find~~~~} \min_c \|Xc-g\|^2 + \gamma \|c\|_{\ell_p}^2,
\]
where $\gamma$ is an alternative regularization parameter.

\subsubsection{$\ell_2$ regularization}

For the solution of the overdetermined linear equations $X c = g$ we take
\[
c = (X^T X + \gamma\, {\rm diag}(X^T X))^{-1} X^T g,
\]
where ${\rm diag}(A)$ denotes the diagonal matrix whose diagonal elements are the same as in $A$.
The penalization matrix was chosen as ${\rm diag}(X^T X)$ instead of the identity matrix so that its scaling with respect to the database size and the scale of the basis functions $B_\alpha$, is compatible with that of the covariance matrix $X^T X$.
The regularization parameter $\gamma$ was determined from the 16-fold cross-validation scheme \cite{Alpaydin2014MachineLearning} as described in the next paragraph.

To perform the 16-fold cross-validation, we split the database $X$ evenly into 16 non-overlapping databases $\tilde{X}_1,\ldots,\tilde{X}_{16}$.
We then train 16 different models, each on the database $X\setminus \tilde{X}_i$ and find the RMS error when tested on $\tilde{X}_i$.
The parameter $\gamma$ is then chosen such that the cross-validation RMS error averaged over these 16 models is minimal.

\subsubsection{$\ell_0$ regularization}\label{sec:l0}

The advantage of the $\ell_0$ regularization is that it produces sparse solutions $c$, whereas the $\ell_2$ regularization does not.
We note that the $\ell_1$ regularization also produces sparse solutions \cite{Tibshirani1996lasso}, but our numerical experiments show that the $\ell_0$ regularization produces significantly more sparse solutions (however, at a cost of a larger precomputation time).

We thus solve a sequence of problems, parametrized by an integer parameter $N_{\rm nz}$ (number of non-zeros) as follows:
\[
\text{find }\min_c \|Xc-g\|
\text{ subject to } \|c\|_{\ell_0} = N_{\rm nz},
\]
and choose the minimal $N_{\rm nz}$ such that $\|Xc-g\|$ reaches the accuracy goal.

To describe the algorithm, it is convenient to rewrite the problem as follows.
Let $\bm{A}$ be the set of all indices $\alpha$ (earlier denoted as $A$).
\begin{align*}
\text{find }A\subset \bm{A}
& \text{ such that } \min_c \|Xc-g\| \text{ is minimal,} \\
& \text{ subject to } |A| = N_{\rm nz}
\text{ and } c_{\bm{A}\setminus A} = 0.
\end{align*}

This is essentially a compressed sensing problem \cite{Zhang2013_compressed_sensing}.
In order to solve it we take a standard greedy algorithm (similar to the matching pursuit from the compressed sensing literature) and turn it into a genetic algorithm by adding the local search and crossover steps.
The main variable in this algorithm is the family (population) of the sets $A$ which is denoted by $\calA$.
The cap on the population size is set to be $\Ncap>1$.
The algorithm is as follows.

\begin{itemize}
	\item[1.] Let $\calA=\{\emptyset\}$ as the solution for $N_{\rm nz}=0$.
	
	\item[2.] For each $A\in\calA$ find $i\notin A$, such that $c$ corresponding to $\calA\cup \{i\}$ is the best (i.e., minimizing $\|Xc-g\|$). Then replace $A$ with $A\cup \{i\}$.
	
	\item[3.] ``Crossover'':
		If $|\calA|>1$ then do the following.
		For each pair of sets, $A,A'\in\calA$, divide randomly these sets into two nonintersecting subsets, $A = A_1\cup A_2$ and $A' = A_1'\cup A_2'$, and generate new sets $A_1\cup A_2'$ and $A_1'\cup A_2$.
		To generate such splittings of the sets, first sample uniformly an integer $m\in\{1,\ldots, |A\setminus A'|\}$ and then form $A_2$ and $A_2'$ by uniformly sampling $m$ distinct elements from $A$ and $A'$ respectively.
		Then replace all the sets in $\calA$ with the newly generated sets. (Note that if $|\calA|=\Ncap$ then up to $\Ncap(\Ncap-1)$ sets will be generated---2 sets from each of $\binom{\Ncap}{2}$ pairs.)
	
	\item[4.] ``Local search'':
	\begin{itemize}
		\item[4.1.] For each $A\in\calA$ find $j\in A$ such that $c$ corresponding to $A\setminus \{j\}$ is the best.
		\item[4.2.] Then find $i\notin A\setminus \{j\}$ such that $c$ corresponding to $(A\setminus \{j\})\cup\{i\}$ is the best.
		\item[4.3.] If $i\ne j$ then:
		\begin{itemize}
			\item[4.3a.] include $(A\setminus \{j\})\cup\{i\}$ into $\calA$,
			\item[4.3b.] if $|\calA|> \Ncap$ then exclude $A$ from $\calA$, and
			\item[4.3c.] go to step 4.1.
		\end{itemize}
	\end{itemize}
	
	\item[5.] Remove all but $\Ncap$ best sets in $\calA$.
	
	\item[6.] Repeat steps 2--5 until the accuracy goal on $\|Xc-g\|$ is reached.
	Then take the best set $A\in\calA$ and compute the corresponding $c$.
\end{itemize}

We note that whenever $A$ is fixed then finding $c$ corresponding to $A$ is easy:\\
$c = ((X^T X)_{AA})^{-1} (X^T g)_A,$
where $\bullet_A$ and $\bullet_{AA}$
are the operations of extracting a subset of rows and columns corresponding to $A\subset \bm{A}$.

\section{Numerical Experiments}\label{sec:performance-tests}

Our next goal is to understand how MTP performs compared to other interatomic potentials.
Unfortunately, there are no existing works performing quantitative comparison between different machine learning potentials in terms of their accuracy and computational efficiency.
In the present work we compare the performance of MTP with that of GAP for tungsten \cite{Szlachta2014thesis,2014GAP_tungsten} on the QM database published at {\tt www.libatoms.org} together with the GAP code. 
We test MTP by fitting it on the database of 9\,693 configurations of tungsten, with nearly 150\,000 individual atomic environments, and compare it to the analogously fitted GAP, namely, the one tagged as GAP$_6$ in \cite{2014GAP_tungsten} or as the iterative-SOAP-GAP or I-S-GAP in \cite{Szlachta2014thesis}.
We note that this is a database of the Kohn-Sham DFT calculations (as opposed to a tight-binding model used in the analysis) for an electronic temperature of 1000$^\circ$K. We did not prove algebraic convergence of MTP to this model, however we will observe it numerically.

We choose $R_\cut=4.9$\AA, and also set the minimal distance parameter to be $R_{\rm min} := 1.9$\AA.
For the radial functions we choose
\[
\hat{f}_{\mu,\nu}(r) := \begin{cases}
r^{-\nu-2} r^\mu (R_\cut-r)^2 & r<R_\cut, \\
0 & r\geq R_\cut
,
\end{cases}
\]
and then for each $\nu$ we orthonormalize them on the interval $[R_{\rm min},R_\cut]$ with the weight $r^{2\nu}(r-R_{\rm min})(R_\cut-r)$. This procedure yields us the functions $f_{\mu,\nu}$ used with \eqref{eq:Mtilde}.
(This procedure is equivalent to first orthonormalizing the functions $r^{-2} r^\mu (R_\cut-r)^2$ with the weight $(r-R_{\rm min})(R_\cut-r)$, the same weight as for the Chebyshev polynomials, and then multiplying by $r^{-\nu}$.)
Here $r^{-\nu}$ compensates for $r^{\otimes\nu}$, $r^{-2}$ prioritizes closer atoms to more distant onces, and $(R_\cut-r)^2$ ensures a smooth cut-off.

\subsection{Convergence with the Number of Basis Functions}

Even though theoretically it is proved that MTP is systematically improvable, the actual accuracy depends on the size of the regression problem to be solved.
We therefore first study how fast the MTP fit converges with the number of basis functions used.

Theorem \ref{th:conv} suggests that the fitting error decreases exponentially with the polynomial degree. At the same time, a crude upper bound on the dimension of the space of polynomials of degree $m$ is $(n+1)^m$, where $n$ is the maximal number of atoms in an atomic neighborhood.
This suggests an algebraic decay of the fitting error with the number of basis functions.

We hence study convergence of the error of fitting of potentials based on two choices of the sets of $\alpha$.
The first choice of the set of $\alpha$, $A_N$, is to limit the corresponding degree of $B_\alpha$ for $\alpha\in A_N$ by $N$:
\[
A_N = \{\alpha:\deg(B_\alpha) \leq N\}.
\]
Since $f_{\mu,\nu}$ are no longer polynomials, we make a convention that the degree of $f_{\mu,\nu}$ is $\mu+1$ (while the degree of the constant function $f(u) := 1$ is zero).
Interestingly, it was found that a slightly better choice is
\[
A'_N = \{\alpha:\deg(B_\alpha) \leq N+8(\#\alpha)\},
\]
where $\#\alpha$ is the length of $\alpha$.
In fact, $A'_N$ corresponds to $A_N$ if we make another convention that $\deg(f_{\mu,\nu}) = \mu+5$.

Figure \ref{fig:error} displays the RMS fitting error in forces as a function of the size of the set $A_N$.
One can observe an algebraic convergence, which indicates that Theorem \ref{th:conv} is valid for the Kohn-Sham DFT also.
The observed rate is $(\# A)^{-0.227}$, where the exponent $-0.227$ is not a universal constant associated with MTP, but depends on the database chosen.
A preasymptotic regime with a faster algebraic convergence rate can be seen for small $\#A$.

\begin{figure}[htb]
\hfill
\includegraphics[scale=1]{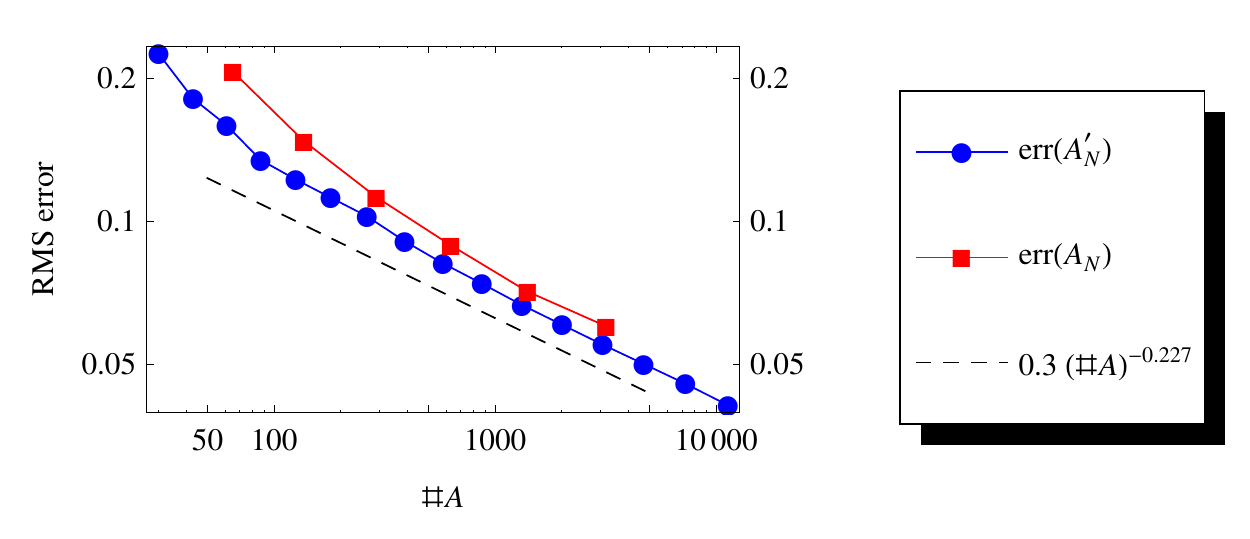}
\hfill
$\mathstrut$
\caption{%
	RMS fitting error in forces as a function of the size of $A_N$ and $A'_N$.
	An algebraic convergence can be observed.
}
\label{fig:error}
\end{figure}


\subsection{Performance Tests}

We next test the performance of MTP and compare it to GAP in terms of fitting error and computation (CPU) time.
We choose two versions of MTP different from each other by the set $A$.

The first MTP, denoted by MTP$_1$, is generated by limiting $\deg(B_\alpha) + 8(\#\alpha) \leq 62$ and additionally limiting $(\#\alpha)\leq 4$, $\mu\leq 5$, and $\nu \leq 4$.
MTP$_2$ is generated by limiting $\deg(B_\alpha) + 8(\#\alpha) \leq 52$, additionally limiting $(\#\alpha)\leq 5$, $\mu\leq 3$, and $\nu \leq 5$, and applying the $\ell_0$ regularization algorithm with $\Ncap=4$, as described in Section \ref{sec:l0}, to extract $760$ basis functions (so as to make the accuracy of GAP and MTP$_2$ same, as discussed in the next paragraph).


\begin{table}[htb]
\begin{center}
\begin{tabular}{|rccc|} \hline
Potential:          & GAP & MTP$_1$ & MTP$_2$ \\
\hline
CPU time/atom [ms]: & 134.2 $\pm$2.6  & 2.9 $\pm$0.5  & 0.8 $\pm$0.2 \\
basis functions: & 10\,000 & 11\,133 & 760 \\
\hline
\multicolumn{4}{|c|}{Fit errors:}\\ \hdashline
force RMS error [eV/\AA]:  & 0.0633 & 0.0427 & 0.0633 \\
$\mathstrut$[\%]:          & 4.2\%  & 2.8\%  & 4.2\% \\
\hline
 \multicolumn{4}{|c|}{Cross-validation errors:}\\ \hdashline
regularization parameter $\gamma$: & - & $3\cdot10^{-9}$  & 0 \\
force RMS error[eV/\AA]: & - & 0.0511  & 0.0642 \\
$\mathstrut$[\%]:        & - & 3.4\% & 4.3\% \\
\hline
\end{tabular}
\end{center}
\caption{Efficiency and accuracy of MTP as compared to GAP.
The mean CPU time is given for a single-core computation on the laptop Intel i7-2675QM CPU, together with the standard deviation as computed on a number of independent runs.
The last section of the table reports the force RMS errors for the 16-fold cross-validation.
}
\label{tb:compare}
\end{table}

The data from the conducted efficiency and accuracy tests are summarized in Table \ref{tb:compare}.
The RMS force (more precisely, RMS of all force components over all non-single-atom configurations) is 1.505 eV/\AA, which is used to compute the relative RMS error.
The errors relative to this RMS force are also presented in the table.
The GAP error is calculated based on the data from \cite{Szlachta2014thesis}.
The CPU times do not include the initialization (precomputation) or constructing the atomistic neighborhoods.

One can see that MTP$_1$ has about the same number of fitting parameters as GAP, while its fitting accuracy is about 1.5 times better and the computation time is 40 times smaller.
MTP$_2$ was constructed such that its fitting accuracy is the same as GAP, but it uses much less parameters for fitting and its computation is more than two orders of magnitude faster.

Also included in the table is the 16-fold cross-validation error.
It shows that MTP$_2$ is not overfitted on the given database, whereas MTP$_1$ needs regularization to avoid overfitting.
We anticipate, however, that by significantly increasing the database size, the cross-validation error of MTP$_1$ would go down and reach the current fitting error of MTP$_1$, since the fitting error follows closely the algebraic decay (see Figure \ref{fig:error}) and is not expected to deteriorate with increasing the database size.

%
%


\section{Conclusion}

The present paper considers the problem of approximating a QM interaction model with interatomic potentials from a mathematical point of view.
In particular, (1) a new class of nonparametric (i.e., systematically improvable) potentials satisfying all the required symmetries has been proposed and advantages in terms of accuracy and performance over the existing schemes have been discussed and (2) an algebraic convergence of fitting with these potentials in a simple setting has been proved and it was then confirmed with the numerical experiments.

This work is done under the assumption that all atoms are chemically equivalent.
A straightforward extension to multicomponent systems would be to let the radial functions depend not only on the positions of atoms $x_i$, but also on the types of atoms $t_i$.
Thus, the expression for the moments for atom $i$ could be
\[
\tildeM_{\mu,\nu} =
\sum_{j} f_{\mu,\nu}(|x_j-x_i|, t_i, t_j) (x_j-x_i)^{\otimes \nu},
\]
where the summation is over the neighborhood of atom $i$.
We leave exploring this path to future publications.

\section*{Acknowledgement}

The author is grateful to G\'abor Cs\'anyi for igniting the interest in this topic and for valuable discussions as a part of our on-going collaboration.
Also, the author thanks Albert Bart\'ok-P\'artay for his help with implementation of MTP in the QUIP software\footnote{Available at {\tt http://www.libatoms.org/Home/LibAtomsQUIP}} which was used to compare the performance of MTP with GAP.
Finally, the author thanks the anonymous referees for many suggestions that lead to improvement of the paper.

\bibliographystyle{plain}
\bibliography{report}

\end{document}